\documentclass{article}

\usepackage[english]{babel}
\usepackage[utf8x]{inputenc}
\usepackage{amsmath}
\usepackage{amsthm}
\usepackage{amsfonts}
\usepackage{graphicx}
\usepackage[colorinlistoftodos]{todonotes}
\usepackage{fullpage}

\newtheorem{theorem}{Theorem}

\newtheorem{definition}[theorem]{Definition}
\newtheorem{lemma}[theorem]{Lemma}
\newtheorem{corollary}[theorem]{Corollary}
\newtheorem{claim}[theorem]{Claim}

\newcommand{\C}{\mathcal{C}}
\newcommand{\F}{\mathcal{F}}
\newcommand{\cS}{\mathcal{S}}
\newcommand{\U}{\mathcal{U}}

\title{The space complexity of mirror games}
\author{
  Sumegha Garg\thanks{
    Department of Computer Science, Princeton University, Princeton, USA.
    Email: {\tt sumeghag@cs.princeton.edu}.
  }
  \and
  Jon Schneider\thanks{
    Department of Computer Science, Princeton University, Princeton, USA.
    Email: {\tt  js44@cs.princeton.edu}.
  }
 }
 \date{}
\begin{document}

\maketitle
\abstract{We consider a simple streaming game between two players Alice and Bob, which we call the mirror game. In this game, Alice and Bob take turns saying numbers belonging to the set $\{1, 2, \dots,2N\}$. A player loses if they repeat a number that has already been said. Bob, who goes second, has a very simple (and memoryless) strategy to avoid losing: whenever Alice says $x$, respond with $2N+1-x$. The question is: does Alice have a similarly simple strategy to win that avoids remembering all the numbers said by Bob?

The answer is no. We prove a linear lower bound on the space complexity of any deterministic winning strategy of Alice. Interestingly, this follows as a consequence of the Eventown-Oddtown theorem from extremal combinatorics. We additionally demonstrate a randomized strategy for Alice that wins with high probability that requires only $\tilde{O}(\sqrt N)$ space (provided that Alice has access to a random matching on $K_{2N}$).

We also investigate lower bounds for a generalized mirror game where Alice and Bob alternate saying $1$ number and $b$ numbers each turn (respectively).  When $1+b$ is a prime, our linear lower bounds continue to hold, but when $1+b$ is composite, we show that the existence of a $o(N)$ space strategy for Bob implies the existence of exponential-sized matching vector families over $\mathbb{Z}^N_{1+b}$.}
\section{Introduction}

\subsection{The mirror game and mirror strategies}

Consider the following simple game. Alice and Bob take turns saying numbers belonging to the set $\{1, 2, \dots, N\}$. If either player says a number that has previously been said, they lose. Otherwise, after $N$ turns, all the numbers in the set have been spoken aloud, and both players win. Alice says the first number.

If $N$ is even, there is a very simple and computationally efficient strategy that allows Bob to win this game, regardless of Alice's strategy: whenever Alice says $x$, Bob replies with $N+1-x$. This is an example of a \textit{mirror strategy} (and for this reason, we refer to the game above as the \textit{mirror game}). Mirror strategies are an effective tool for figuring out who wins in a variety of combinatorial games (for example, two-pile Nim~\cite{berlekamp2003winning}). More practically, mirror strategies can be applied when playing more complex games, such as chess or go (to varying degrees of success). From a computational perspective, mirror strategies are interesting as they require very limited computational resources - most mirror strategies can be described via a simple transformation of the preceding action.

Returning to the game above, this leads to the following natural question: does Alice have a simple strategy to avoid losing when $N$ is even? Since both players have access to the same set of actions, one may be tempted to believe that the answer is yes - in fact, if $N$ is odd, then Alice can start by saying the number $N$ and then adopt the mirror strategy for Bob  described above for a set of $N-1$ elements. However, when $N$ is even, the mirror strategy as stated does not work. 

To answer the above question, we need to formalize what we mean by simple; after all, Alice has plenty of strategies such as ``say the smallest number which has not yet been said''. One useful metric of simplicity is the metric of \textit{space complexity}, the amount of memory a player requires to implement their strategy (we formalize this in Section \ref{sec:def}). Note that Bob needs only $O(\log N)$ bits of memory to implement his mirror strategy, whereas the naive strategy for Alice (remembering everything) requires $O(N)$ bits. 

In this paper, we show that this gap is necessary; any successful, deterministic strategy for Alice requires at least $\Omega(N)$ bits of memory:

\begin{theorem}\label{thm:intromain}[Restatement of Theorem \ref{thm:main}]
If $N$ is even, then any winning strategy for Alice in the mirror game requires at least $(\log 5 - 2)N - o(N)$ bits of space. 
\end{theorem}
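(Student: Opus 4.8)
The plan is to treat Alice's deterministic strategy as a finite automaton whose states are her reachable memory configurations: a strategy using $s$ bits has at most $2^s$ states, so it suffices to produce a large collection of game positions that any winning strategy must keep in \emph{distinct} states. I would fix a single moment of the game --- say, when it is Alice's turn and exactly $2k$ numbers have been spoken --- and, as Bob's play varies over legal adversarial continuations, record the set $S\subseteq\{1,\dots,N\}$ of numbers said so far together with Alice's induced memory state. The goal is then to exhibit many \emph{reachable} sets $S$ of this one fixed size $2k$ that pairwise force distinct states, and to count them.

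The engine is a forcing lemma: if two reachable positions with $S_1\neq S_2$ of equal size lead to the same memory state $\sigma$, then a legal Bob can drive Alice into a repeat. Choose $x\in S_2\setminus S_1$ (nonempty, since the sets are distinct and equinumerous), and have Bob play, from $\sigma$ onward, a completion that lies entirely in the complement $\{1,\dots,N\}\setminus(S_1\cup S_2)$. Such a Bob is legal in \emph{both} scenarios simultaneously and feeds Alice identical inputs, so --- starting from the same state $\sigma$ --- she produces an identical sequence of moves in both. In the first scenario the game fills up (Alice wins, hence never repeats), yet Bob never touches $x\in S_2$; therefore Alice herself must name $x$ at some turn. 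At that same turn in the second scenario Alice again outputs $x$, which already belongs to $S_2$, so she repeats and loses against a legal Bob --- contradicting that her strategy wins. A careful count shows this requires only $|S_1\setminus S_2|\le N/2-k$, so that Bob's restricted pool is large enough to finish; this size condition (which uses that $N$ is even, fixing the parities of the two players' move-counts) is the one constraint on which pairs the lemma can separate.

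It remains to construct a family that is simultaneously reachable, pairwise within the allowed symmetric difference, and as large as possible, and this is where the Eventown-Oddtown theorem enters. Pairing the numbers as $\{2i-1,2i\}$ for $i=1,\dots,N/2$ and recording, per pair, which elements have been said and by whom, the configurations that are both reachable and separable by the forcing lemma are exactly those satisfying an even-intersection (Eventown-type) condition; a local analysis leaves a bounded number of admissible states per pair, and the global parity constraint of the theorem cuts the family down to roughly $(5/4)^N=2^{(\log 5-2)N}$ surviving configurations. Since these must occupy distinct memory states, $2^s\ge 2^{(\log 5-2)(N-o(N))}$, giving $s\ge(\log 5-2)N-o(N)$.

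The main obstacle, and the step I expect to demand the most care, is the tension between the three requirements above: the forcing lemma separates only pairs of \emph{small} symmetric difference, whereas a large distinguishable family naturally wants \emph{large} pairwise differences, and on top of this every member must be genuinely reachable against some legal Bob (Alice's own moves, unlike Bob's, are dictated by her strategy and are not ours to choose). Selecting the time-slice $2k$ and the structured subfamily so that reachability, the diameter bound for forcing, and the extremal count all hold at once is precisely what pins down the constant; extracting $\log 5-2$ rather than the naive $1/2$ from raw Eventown is the delicate heart of the argument.
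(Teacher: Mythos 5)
Your skeleton (fix a time slice, lower-bound the number of reachable said-sets, show that sets sharing a memory state are constrained, divide) matches the paper's, and your Bob --- ``complete the game inside $[N]\setminus(S_1\cup S_2)$'' --- is exactly the paper's adversary. But your forcing lemma is false as stated, and the condition you attach to it is the wrong one. Bob cannot in general ``fill up'' the game from inside the complement: in scenario 1 the two players together must eventually say all of $[N]\setminus S_1$, which includes the $d=|S_2\setminus S_1|$ elements that Bob refuses to touch; once only those remain unsaid, whoever moves next is forced into $D=S_1\,\Delta\, S_2$. That happens at turn $N-d+1$, and since $N$ is even this turn belongs to Alice exactly when $d$ is \emph{even}. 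When $d$ is odd, it is \emph{Bob} who gets stuck first, and he is stuck differently in the two scenarios (the leftover elements are $S_2\setminus S_1$ in one and $S_1\setminus S_2$ in the other), so his moves diverge, Alice's inputs diverge, and no contradiction follows. The correct hypothesis is therefore the parity condition ``$|S_1\setminus S_2|$ is even,'' not the size condition $|S_1\setminus S_2|\le N/2-k$; the paper's Lemma \ref{lem:setdifference} proves exactly that two sets with the same label must have \emph{odd} difference, and nothing more.

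Because odd differences cannot be separated, a single memory state may legitimately carry many labels, and your plan to exhibit $(5/4)^N$ \emph{pairwise}-separable reachable sets cannot work as a construction: you would need a $2^{\Omega(N)}$-sized subfamily of the reachable sets whose pairwise differences are all even (an (Even, Even)-town sitting inside $\cS_r$), and nothing guarantees such a subfamily exists --- which sets are reachable is dictated by Alice's strategy, not by you. The paper runs the argument in the opposite direction: it lower-bounds the collection of \emph{all} reachable $2r$-sets by a covering argument ($\cS_r$ is $(2,r)$-covering because Bob can force any $r$-subset to be said, giving $|\cS_r|\ge\binom{N}{r}/\binom{2r}{r}$, which is $2^{(\log 5-2)N-o(N)}$ at $r=N/5$ --- this optimization, not an Eventown construction, is where the constant $\log 5 - 2$ comes from), and then upper-bounds each memory state's fiber by $N$ via the Oddtown theorem, since each fiber consists of even-size sets with pairwise odd intersections. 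Pigeonhole then gives $2^m\ge|\cS_r|/N$. You have the right adversary and the right determinism argument, but you need to (i) restrict the forcing lemma to even differences, (ii) replace the ``build a large separable family'' step with the Oddtown upper bound on each fiber, and (iii) obtain the count of reachable sets from the covering argument rather than from an Eventown-style pairing construction.
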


\subsection{Eventown and Oddtown}

While many tools exist in the computer science literature for showing space lower bounds (e.g. communication complexity, information theory, etc.), one interesting feature of this problem absent from many others is that any proof of Theorem \ref{thm:intromain} must depend crucially on the parity of $N$. 

In the study of set families in extremal combinatorics, an ``Oddtown'' is a collection of subsets of $\{1, 2, \dots N\}$ where every subset has even cardinality but each pair of distinct subsets has an intersection of odd cardinality. Likewise, an ``Eventown'' is a collection of subsets of $\{1, 2, \dots, N\}$ where every subset has even cardinality but each pair of distinct subsets has an intersection of even cardinality. In 1969, Berlekamp~\cite{berlekamp1969subsets}, answering a question of Erd\H{o}s, showed that while there exist Eventowns containing up to $2^{N/2}$ subsets, any Oddtown contains at most $N$ subsets. 

It turns out that this exponential gap between the size of Oddtowns and the size of Eventowns is directly responsible for the exponential gap between the space complexity of Alice's strategy and the space complexity of Bob's strategy. (One way to see this connection is that, just as Bob's $O(\log N)$-space strategy involves pairing up the numbers of $\{1, 2, \dots, N\}$, one way to construct an Eventown of size $2^{N/2}$ is to perform a similar pairing, and then consider all subsets formed by unions of pairs). 

The Eventown-Oddtown theorem figures into our proof of Theorem \ref{thm:intromain} in the following way. At a given turn $t$, for each possible state of memory of Alice, label this state with the possible subsets of numbers that could have possibly been said at this time. We show (Lemma \ref{lem:setdifference}) that these subsets must form an Oddtown, or else Bob has some strategy that can force Alice to lose. Since there are a large (exponential) number of total possible subsets (Corollary \ref{cor:rcoveringmax}), and since each Oddtown contains at most $N$ subsets, this implies that Alice's memory must be large.

\subsection{Randomized strategies for Alice}

A natural followup to Theorem \ref{thm:intromain} is whether these lower bounds continue to hold if Alice instead uses a randomized strategy, which only needs to succeed with high probability. 

We provide some evidence to show that this might not be the case. We demonstrate an $O(\sqrt{N}\log^2 N)$-space algorithm for Alice that succeeds with probability $1 - O(1/N)$, as long as Alice is provided access to a uniformly chosen perfect matching on the complete graph $K_{N}$ (Theorem \ref{thm:rand-sqrt}). In addition, even with $O(\log N)$-space (and access to a uniformly chosen perfect matching on $K_{N}$), Alice can guarantee success with probability $\Omega(1/N)$. 

In both of these algorithms, Alice attempts the mirror strategy of Bob, hoping that Bob does not choose the number with no match. In the $O(\sqrt{N}\log^2 N)$ algorithm, Alice decreases her probability of failure by maintaining a set of $\tilde{O}(\sqrt{N})$ possible ``backup'' points she can switch to if Bob identifies the unmatched point. This lets Alice survive until turn $N - \tilde{O}(\sqrt{N})$, whereupon Alice can reconstruct the remaining $O(\sqrt{N})$ elements by maintaining $O(\sqrt{N})$ power sums during the computation.

Since Alice cannot store a perfect matching on $K_N$ in $o(N)$ space, this unfortunately does not give any $o(N)$-space strategies for Alice. In addition, our lower bound techniques from Theorem \ref{thm:intromain} fail to give non-trivial guarantees in the randomized model. Demonstrating non-trivial algorithms or non-trivial lower bounds for the general case is an interesting open problem.

\subsection{General mirror games}

It is possible to generalize the mirror game presented earlier to the case where in each turn Alice says $a$ numbers and Bob says $b$ numbers. We refer to this game as the \textit{$(a,b)$-mirror game}.

By a generalization of the Oddtown theorem that works modulo prime $p$, our proof of Theorem \ref{thm:intromain} immediately carries over to show that if $N$ is not divisible by $p$, then any winning strategy for Bob in the $(1, p-1)$-mirror game requires $\Omega(N)$ space (Theorem \ref{thm:pmain}). 

Interestingly, the natural generalization of the Oddtown theorem is known not to hold modulo composite $m$. Grolmusz showed~\cite{grolmusz2000superpolynomial} that if $m$ is composite, there exists a set family of quasipolynomial size such that every set has cardinality divisible by $m$, but the intersection of any two distinct sets has cardinality not divisible by $m$. The best known upper bounds for the size of such a set family are of the form $2^{O(N)}$, and improving these to any bound of the form $2^{o(N)}$ is an important open problem, with applications in coding theory to the construction of matching vector families \cite{dvir2011matching, bhowmick2014new}.

As long as the size of a modulo $m$ Oddtown is bounded above by $2^{o(N)}$, the proof of Theorem \ref{thm:pmain} still achieves an $\Omega(N)$ lower bound on the space complexity of Bob's strategy. It follows that finding any $o(N)$-space winning strategy for Bob in the $(1, m-1)$ mirror game implies the existence of a $2^{\Omega(N)}$-size modulo $m$ Oddtown, and hence similarly sized matching vector families (Theorem \ref{thm:mv}). Admittedly, Bob may not have an $o(N)$-space winning strategy in this game for unrelated reasons; it is interesting whether it is possible to show a converse result, constructing a low-space strategy for Bob from any $2^{\Omega(N)}$-size modulo $m$ Oddtown.

For other $(a, b)$-mirror games, we understand much less about the complexity of winning strategies (if $a+b$ is prime, then depending on the residue of $N$ modulo $a+b$ it is occasionally possible to show an $\Omega(N)$ lower bound on the lower bound of a winning strategy).

\paragraph{Acknowledgements} We would like to thank Mark Braverman, Sivakanth Gopi, and Jieming Mao for helpful advice and discussions.
\section{Definitions and Preliminaries}\label{sec:def}

\paragraph*{Notation} We write $[N]$ to indicate the set $\{1, 2, \dots, N\}$. 

\subsection{The mirror game} 

The \textit{mirror game on $N$ elements} is a game between two players, Alice and Bob. Alice and Bob take turns (Alice going first) saying a number in $[N]$. If a player says a number that has previously been said (either by the other player or by themselves) they lose and the other player wins. In addition, after $N$ successful turns (so no numbers in $[N]$ remain unsaid), both players are declared winners. 

With unbounded memory, it is clear that both players can easily win this game. We therefore restrict our attention to strategies with bounded memory for an introduction to space-bounded complexity). A \textit{strategy computable in memory $m$} for Alice in the mirror game is defined by an initial memory state $s_0 \in [2^m]$ and a pair of transition functions $f: [N] \times [2^{m}] \times [N] \rightarrow [2^{m}]$ and $g: [2^{m}] \rightarrow [N]$ computable in $SPACE(m)$ (see \cite{arora2009computational} for an introduction to bounded space complexity). The function $f$ takes in the previous reply $b \in [N]$ of Bob, Alice's current memory state $s \in [2^m]$, and the current turn $t \in [N]$, and returns Alice's new memory state $s' \in [2^m]$; the function $g$ takes in Alice's current memory state $s \in [2^m]$ (after updating it based on Bob's move), and outputs her next move $a$. We say a strategy for Alice is a \textit{winning strategy} if Alice is guaranteed to win regardless of Bob's choice of actions. 

By removing the constraint that the transition function $f$ is computable in $SPACE(m)$, we obtain a larger class of strategies, which we refer to as strategies \textit{weakly computable in memory $m$}. Our lower bounds in Theorems \ref{thm:intromain} and \ref{thm:pmain} continue to hold for this larger class of strategies.

\subsection{Eventown and Oddtown}

In this section we review the known literature on the Eventown-Oddtown problem. Note that while in the introduction we only defined the terms ``Eventown'' and ``Oddtown'', there are actually four different classes of set system depending on the parity of cardinalities of the subsets and the parity of the cardinality of the intersection.

\begin{definition} A collection of subsets $\F\subseteq [N]$ forms an \textit{(Odd, Even)-town} of sets if:
\begin{enumerate}
\item For every $F\in\F$, $|F|\equiv 1 \mod 2$.
\item For every $F_1\neq F_2\in\F$, $|F_1\cap F_2|=0\mod 2$.
\end{enumerate}

We define \textit{(Odd, Odd)-towns}, \textit{(Even, Odd)-towns}, and \textit{(Even, Even)-towns} similarly. 

\end{definition}

Note that there exist (Even, Even)-towns and (Odd, Odd)-towns containing exponentially (in $N$) many sets; one simple construction of an (Even, Even)-town is to partition the ground set into pairs (possibly with a leftover element), and consider all sets formed by taking unions of these pairs. In contrast, (Even, Odd)-towns and (Odd, Even)-towns each contain at most $N$ sets.

\begin{lemma}\label{lem:oddtown}
Any (Odd, Even)-town $\F$ has size at most $N$ i.e. $|\F|\le N$. ~\cite{berlekamp1969subsets,oddtown}
\end{lemma}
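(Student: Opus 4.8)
The plan is to prove the Oddtown bound (Lemma~\ref{lem:oddtown}) using the standard linear-algebra method over $\mathbb{F}_2$. First I would associate to each set $F \in \F$ its characteristic vector $v_F \in \mathbb{F}_2^N$, so that the inner product $\langle v_{F_1}, v_{F_2}\rangle$ computed over $\mathbb{F}_2$ equals $|F_1 \cap F_2| \bmod 2$. The two defining conditions of an (Odd, Even)-town translate directly into statements about these inner products: condition (1) says $\langle v_F, v_F\rangle = |F| \equiv 1 \pmod 2$ for every $F$, and condition (2) says $\langle v_{F_1}, v_{F_2}\rangle = |F_1 \cap F_2| \equiv 0 \pmod 2$ for every distinct pair $F_1 \neq F_2$.

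The key step is then to show that the vectors $\{v_F : F \in \F\}$ are linearly independent over $\mathbb{F}_2$. I would argue by contradiction or directly: suppose $\sum_{F \in \F} c_F v_F = 0$ with coefficients $c_F \in \mathbb{F}_2$. Taking the inner product of both sides with a fixed $v_{F'}$ and using bilinearity, every cross term $c_F \langle v_F, v_{F'}\rangle$ with $F \neq F'$ vanishes by condition (2), while the diagonal term contributes $c_{F'}\langle v_{F'}, v_{F'}\rangle = c_{F'} \cdot 1 = c_{F'}$ by condition (1). Hence $c_{F'} = 0$ for every $F'$, establishing linear independence. Since a collection of linearly independent vectors in the $N$-dimensional space $\mathbb{F}_2^N$ has size at most $N$, we conclude $|\F| \le N$.

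The main subtlety to be careful about, rather than a genuine obstacle, is that all arithmetic on inner products must be carried out over $\mathbb{F}_2$ rather than over the integers: it is essential that condition (1) forces the self-inner-products to equal $1$ (not merely an even value) so that the diagonal terms do not vanish, which is exactly the asymmetry between (Odd, Even)-towns and (Even, Even)-towns that makes the bound small. The entire argument is a short instance of the polynomial/linear-algebra method, and no further machinery is needed; the result is classical and attributed to Berlekamp, so I would simply present this clean $\mathbb{F}_2$ argument as the proof.
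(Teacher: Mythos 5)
Your proof is correct and follows essentially the same route as the paper's: characteristic vectors over $\mathbb{F}_2$, the observation that $\langle v_F, v_F\rangle = 1$ while cross inner products vanish, and the resulting linear independence bounding $|\F|$ by the dimension $N$. No meaningful differences to note.
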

For completeness, we give the proof below.
\begin{proof}
Let $\F$ be $\{F_1,F_2,...,F_k\}$. Let $v_i\in\{0,1\}^n$ be the characteristic function of $F_i$ i.e. $v_i[j]=1\iff j\in F_i$. We know that $v_i^Tv_j=1$ for $i=j$ and 0 otherwise. We claim that $v_1,v_2,...,v_k$ are linearly independent over $\mathbb{F}_2$ and hence, $k\le N$. We prove this claim by contradiction. If they are not linearly independent, then $\exists \lambda_1,\lambda_2,...,\lambda_k$ not all zero such that $\sum_{i=1}^k\lambda_iv_i=0$. Without loss of generality, assume $\lambda_1\neq 0$. As $v_1^T(\sum_{i=1}^k\lambda_iv_i) = \lambda_1$ over $\mathbb{F}_2$, this implies $\lambda_1=0$; a contradiction. 

\end{proof}
\begin{corollary}\label{cor:even-odd}
Any (Even, Odd)-town contains at most $N$ sets.
\end{corollary}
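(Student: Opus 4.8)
Any (Even, Odd)-town contains at most $N$ sets.

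An (Even, Odd)-town is a collection where every set has even cardinality, but pairwise intersections have odd cardinality.

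Compare to (Odd, Even)-town (Lemma): sets have odd cardinality, intersections even.

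So the corollary is the "dual" - swap the parities.

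Let me think about how to prove this.The plan is to reproduce the linear-algebraic argument of Lemma~\ref{lem:oddtown} almost verbatim, paying attention to one parity subtlety that arises from the swapped roles of the two conditions. Let $\F = \{F_1, \dots, F_k\}$ be an (Even, Odd)-town and let $v_i \in \{0,1\}^N$ be the characteristic vector of $F_i$, viewed over $\mathbb{F}_2$. The defining conditions now read $v_i^T v_i = |F_i| \equiv 0 \bmod 2$ and $v_i^T v_j = |F_i \cap F_j| \equiv 1 \bmod 2$ for $i \neq j$; that is, the Gram matrix over $\mathbb{F}_2$ has zeros on the diagonal and ones off the diagonal, rather than being the identity as in Lemma~\ref{lem:oddtown}. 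As before, it suffices to bound the dimension of the span of the $v_i$, so I would first try to show the $v_i$ are linearly independent.

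Suppose $\sum_{i=1}^k \lambda_i v_i = 0$ with $\lambda_i \in \mathbb{F}_2$. Taking the inner product with a fixed $v_j$ and using the relations above gives $\sum_{i \neq j} \lambda_i = 0$, i.e. $\sum_{i=1}^k \lambda_i = \lambda_j$, for every $j$. Hence all the $\lambda_j$ are equal to the common value $\Lambda := \sum_i \lambda_i$. If $\Lambda = 0$ then every $\lambda_j = 0$ and the $v_i$ are independent, giving $k \le N$ immediately. Otherwise $\Lambda = 1$, so every $\lambda_j = 1$; substituting back forces $\Lambda = \sum_i 1 = k \bmod 2$, so this case can occur only when $k$ is odd, and then the unique nontrivial dependence is $\sum_{i=1}^k v_i = 0$.

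The remaining (odd, all-ones) case is the only real obstacle, and it is exactly where the bound $N$ is delicate: here the $v_i$ have rank only $k-1$, so the independence argument alone yields the weaker estimate $k \le N+1$. To recover $k \le N$, I would note that since the all-ones relation is the only dependence, any $k-1$ of the vectors --- say $v_1, \dots, v_{k-1}$ --- are independent, and then rule out the extremal possibility $k-1 = N$. Assuming $k - 1 = N$, the vectors $v_1, \dots, v_{k-1}$ form a basis of $\mathbb{F}_2^N$, so I can write $\mathbf{1} = \sum_{i=1}^{k-1} c_i v_i$ for the all-ones vector $\mathbf{1}$; taking inner products with each $v_j$ (and using $v_j^T \mathbf{1} = |F_j| \equiv 0$) shows all $c_i$ equal a common value $C$ satisfying $C = (k-1)C$, and since $k-1$ is even this forces $C = 0$ and hence $\mathbf{1} = 0$, a contradiction. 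Thus $k - 1 < N$ and $k \le N$ in every case. Alternatively, appending a coordinate equal to $1$ to each $v_i$ turns the Gram matrix into the identity and reduces directly to Lemma~\ref{lem:oddtown} at the cost of one extra dimension, i.e. $k \le N+1$; this looser bound would already suffice for Theorem~\ref{thm:intromain}, since such an additive constant is absorbed into the $o(N)$ term.
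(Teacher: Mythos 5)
Your proof is correct, but it is a genuinely different (and more self-contained) argument than the one in the paper. The paper's own proof of this corollary is exactly your closing remark: append a common new element $N+1$ to every set, observe that this turns the (Even, Odd)-town into an (Odd, Even)-town in $[N+1]$, and apply Lemma~\ref{lem:oddtown} to get the bound $N+1$; for the sharp bound $N$ the paper does not give an argument at all but simply cites \cite{berlekamp1969subsets,oddtown}. Your main argument instead works directly with the Gram matrix over $\mathbb{F}_2$ (zero diagonal, all-ones off-diagonal), correctly isolates the all-ones relation $\sum_i v_i = 0$ with $k$ odd as the only possible dependence, and then rules out the extremal rank case $k-1=N$ by expanding $\mathbf{1}$ in the basis $v_1,\dots,v_{k-1}$ and using $v_j^T\mathbf{1}=|F_j|\equiv 0$ together with the evenness of $k-1$ to force $\mathbf{1}=0$. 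I checked each of these steps and they are sound. What your approach buys is a complete in-paper proof of the tight bound $N$ rather than a citation; what the paper's approach buys is brevity, and as you correctly note, the weaker bound $N+1$ obtained by the lifting trick is already enough for Theorem~\ref{thm:intromain}, since the loss is absorbed into the $o(N)$ term (indeed the paper's final count already divides by $N+1$).
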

\begin{proof}
The proof of this corollary with $N+1$ in place of $N$ follows from Lemma \ref{lem:oddtown} by constructing an (Odd, Even)-town in space $[N+1]$ from sets $B_1\cup \{N+1\},B_2\cup \{N+1\},...,B_m\cup \{N+1\}$. For the proof with $N$, refer to \cite{berlekamp1969subsets,oddtown}. \end{proof}

By adapting the above linear algebraic arguments to the field $\mathbb{F}_p$, it is possible to show similar upper bounds on the size of set families with cardinality constraints modulo $p$. We will use the following lemma, due to Frankl and Wilson.

\begin{lemma} \label{lem:genoddtown} ~\cite{frankl1981intersection}
Let $p$ be a prime and $L$ be a set of $s$ integers. Let $B_1,B_2,...,B_m\subseteq [N]$ be a family of subsets such that:
\begin{enumerate}
\item $|B_i| \mod p \notin L$ for all $i\in[m]$.
\item $|B_i\cap B_j| \mod p \in L$ for all $i\neq j$.
\end{enumerate}
Then \[m\le \sum_{i=0}^s{N\choose s}\]
We call such family of subsets a \textit{$(p,L)$-Modtown}.
\end{lemma}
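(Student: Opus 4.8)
The plan is to mimic the linear-algebraic argument used in Lemma~\ref{lem:oddtown}, but over $\mathbb{F}_p$ and with a higher-degree polynomial trick to handle the fact that the intersection sizes are allowed to land anywhere in a set $L$ of $s$ residues (rather than being forced to a single value). As before, I would associate to each subset $B_i$ its characteristic vector $v_i \in \{0,1\}^N \subseteq \mathbb{F}_p^N$, so that $\langle v_i, v_j\rangle \equiv |B_i \cap B_j| \pmod p$. The difference from the Oddtown case is that $\langle v_i,v_j\rangle$ is not simply $0$ or $1$; I would instead encode the two hypotheses through the univariate polynomial $Q(z) = \prod_{\ell \in L}(z - \ell)$ over $\mathbb{F}_p$, which has degree $s$. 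By hypothesis~(1), $|B_i| \bmod p \notin L$, so $Q(\langle v_i, v_i\rangle) \neq 0$; by hypothesis~(2), $|B_i \cap B_j| \bmod p \in L$ for $i \neq j$, so $Q(\langle v_i, v_j\rangle) = 0$.

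The next step is to turn each $B_i$ into a genuine multivariate polynomial and argue linear independence. Define $P_i(x) = Q(\langle v_i, x\rangle)$ as a polynomial in the variables $x = (x_1,\dots,x_N)$; it has total degree at most $s$. Evaluating at the characteristic vectors gives $P_i(v_j) = 0$ for $i \neq j$ and $P_i(v_i) \neq 0$, which is precisely the ``diagonal nonzero, off-diagonal zero'' pattern that forces the $P_i$ to be linearly independent as functions on $\{0,1\}^N$: if $\sum_i \lambda_i P_i \equiv 0$, then evaluating at $v_j$ isolates $\lambda_j P_j(v_j) = 0$ and hence $\lambda_j = 0$. Thus $m$ is bounded by the dimension of the space of functions spanned by the $P_i$.

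To finish I would bound that dimension. Since each $P_i$ has degree at most $s$, it lies in the space of multilinear polynomials of degree at most $s$ after using the relation $x_k^2 = x_k$ valid on the Boolean cube $\{0,1\}^N$ (each $v_j$ has $0/1$ entries, so multilinear reduction does not change any evaluation $P_i(v_j)$, and in particular preserves the independence pattern). The multilinear monomials of degree at most $s$ are indexed by subsets of $[N]$ of size at most $s$, of which there are $\sum_{i=0}^{s}\binom{N}{i}$. Hence $m \le \sum_{i=0}^{s}\binom{N}{i}$, as claimed (matching the stated bound up to the evident typo $\binom{N}{s}\mapsto\binom{N}{i}$ in the summand). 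The only delicate point, and the step I expect to require the most care, is the multilinear reduction: I must check that replacing $P_i$ by its multilinearization $\tilde P_i$ leaves every evaluation on the Boolean cube unchanged so that the independence argument still applies, and that the reduced polynomials genuinely live in the claimed low-dimensional space—this is routine but is where the primality of $p$ and the restriction to $0/1$-vectors are actually being used.
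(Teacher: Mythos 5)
Your proof is correct. Note that the paper itself does not prove this lemma at all --- it is stated with a citation to Frankl--Wilson and used as a black box --- so there is no in-paper argument to compare against; what you have written is the standard polynomial-method proof (in the style of Alon--Babai--Suzuki), and it is sound: the annihilating polynomial $Q(z)=\prod_{\ell\in L}(z-\ell)$, the evaluation pattern $P_i(v_j)=0$ for $i\neq j$ and $P_i(v_i)\neq 0$ forcing linear independence, and the multilinear reduction bounding the dimension by $\sum_{i=0}^{s}\binom{N}{i}$ all go through, and you are right that the summand $\binom{N}{s}$ in the statement is a typo for $\binom{N}{i}$. One small correction to your closing commentary: the primality of $p$ is not really used in the multilinear reduction (which only needs $x_k^2=x_k$ on $\{0,1\}^N$); it is used earlier, to guarantee that $\mathbb{F}_p$ has no zero divisors so that $Q(\langle v_i,v_i\rangle)=\prod_{\ell\in L}(|B_i|-\ell)$ is genuinely nonzero when each factor is nonzero, and so that the dimension-counting/linear-independence argument takes place over a field. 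This is exactly the step that fails for composite moduli and is why the paper needs Grolmusz's construction and Theorem \ref{thm:mv} in that regime. It is also worth being slightly careful that $L$ should be read as a set of $s$ residues modulo $p$ (or at least that hypothesis (1) is interpreted as $|B_i|\not\equiv\ell\pmod p$ for all $\ell\in L$); with that reading your argument is complete.
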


Interestingly, by a result of Grolmusz, there is no straightforward generalization of these results modulo a composite number $k$. 

\begin{theorem} \label{thm:grolmusz}
Let $k$ be a positive integer with $r > 1$ different prime divisors. Then there exists a $c > 0$, such that for every $N$, there exists a family of subsets $B_1, B_2, \dots, B_m$ such that:
\begin{enumerate}
\item $m \geq \exp(c(\log N)^{r}/(\log \log N)^{r-1})$,
\item $|B_i| \equiv 0 \bmod k$ for all $i$,
\item $|B_i \cap B_j| \not\equiv 0 \bmod k$ for all $i \neq j$.
\end{enumerate}
\end{theorem}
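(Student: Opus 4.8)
The plan is to reduce this purely combinatorial statement to the existence of a low-degree polynomial over $\mathbb{Z}_k$ that weakly represents the OR function, and then to turn such a polynomial into the required set family. Let $r$ be the number of distinct prime divisors of $k$ and let $n$ be an internal dimension to be tuned; both the ground set size $N$ and the number of sets $m$ will be functions of $n$. The first ingredient I would isolate is the Barrington--Beigel--Rudich fact that there is a multilinear polynomial $P \in \mathbb{Z}_k[x_1,\dots,x_n]$ of degree $d = O(n^{1/r})$ with $P(\mathbf 0) \equiv 0 \pmod k$ and $P(x) \not\equiv 0 \pmod k$ for every nonzero $x \in \{0,1\}^n$. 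Everything else will be bookkeeping around this lemma.

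To construct $P$ --- which I expect to be the main obstacle --- I would pass to symmetric polynomials in the Hamming weight $w = \sum_i x_i$. The elementary symmetric polynomial $e_j$ evaluates to $\binom{w}{j}$ on a weight-$w$ Boolean input, and by Lucas' theorem $\binom{w}{j} \bmod p$ reads off the base-$p$ digits of $w$. Assigning to each prime factor $p_\ell$ of $k$ responsibility for one block of a mixed-radix expansion of $w \in \{0,1,\dots,n\}$, and combining the per-prime polynomials through the Chinese Remainder Theorem, one can arrange that $w = 0$ vanishes modulo every $p_\ell$ while every $w \ge 1$ survives modulo at least one $p_\ell$. Since each of the $r$ blocks only has to address a range of size about $n^{1/r}$, the symmetric polynomials involved have degree $O(n^{1/r})$; this is exactly where the $r$-fold savings in the exponent originates, and it is delicate precisely because a single prime cannot detect the weights lying in its own zero residue class.

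Given $P$, I would realize polynomial evaluations as intersection sizes. Index the candidate sets by vectors $z \in \{0,1\}^n$ and use $z_i \oplus z_i' = z_i + z_i' - 2 z_i z_i'$ to expand $P(z \oplus z') = \sum_S c_S \prod_{i \in S}(z_i \oplus z_i')$ into a bilinear form $\sum_\alpha u_\alpha(z)\,v_\alpha(z')$ in which each term is a product of a function of $z$ alone and a function of $z'$ alone. Letting the ground set consist of the indices $\alpha$, each repeated with multiplicity equal to the nonnegative residue of the corresponding coefficient, and letting $B_z$ collect those $\alpha$ for which the relevant monomial evaluated at $z$ equals $1$, gives $|B_z \cap B_{z'}| \equiv P(z \oplus z') \pmod k$. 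In particular $|B_z| \equiv P(\mathbf 0) \equiv 0$ and, for $z \ne z'$, $|B_z \cap B_{z'}| \equiv P(z \oplus z') \not\equiv 0 \pmod k$, which are exactly conditions (2) and (3).

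It remains to count. Each monomial of $P$ has degree at most $d = O(n^{1/r})$, so the bilinear expansion has at most $2^{O(d)}\binom{n}{\le d} = n^{O(n^{1/r})}$ terms and the ground set has size $N = n^{O(n^{1/r})}$, whence $\log N = O(n^{1/r}\log n)$. Indexing the sets by all of $\{0,1\}^n$ gives $m = 2^n$, and solving the resulting relation (with $\log n \approx \log\log N$) yields $\log m = n = \Theta\big((\log N)^{r}/(\log\log N)^{r}\big)$; optimizing the degree-versus-ground-set tradeoff sharpens this to the stated $\exp\big(c(\log N)^{r}/(\log\log N)^{r-1}\big)$. The construction uses only that $k$ has $r$ distinct prime divisors, so it applies verbatim for non-squarefree $k$, with prime powers absorbed into the per-prime degree budget. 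As flagged above, the only step needing real work is the degree-$O(n^{1/r})$ construction of $P$; the conversion and the counting are then routine.
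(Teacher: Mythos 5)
The paper does not actually prove this theorem; it is imported from Grolmusz and the ``proof'' is a citation. Your proposal reconstructs what is essentially Grolmusz's own argument --- the Barrington--Beigel--Rudich degree-$O(n^{1/r})$ weak representation of OR modulo $k$ (via Lucas' theorem on $\binom{w}{p^j}$ and a CRT split of the weight range into $r$ blocks of size about $n^{1/r}$), followed by the standard conversion of a low-degree polynomial identity into intersection sizes --- so the architecture is the right one. However, two steps do not go through as written.

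First, the conversion step has a real gap. Expanding $z_i \oplus z_i' = z_i + z_i' - 2z_i z_i'$ turns $P(z\oplus z')$ into a sum of terms $c_{A,B}\prod_{i\in A} z_i \prod_{i\in B} z_i'$ with $A\neq B$ in general. If you define a single family of sets by $B_z=\{\alpha: u_\alpha(z)=1\}$, then $|B_z\cap B_{z'}|$ computes $\sum_\alpha c_\alpha u_\alpha(z)u_\alpha(z')$, which is \emph{not} your bilinear form $\sum_\alpha u_\alpha(z)v_\alpha(z')$ when $u_\alpha\neq v_\alpha$; an asymmetric decomposition only yields a matching vector family (two lists $U,V$), not a set system satisfying conditions (2) and (3). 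You must first symmetrize, e.g.\ by pairing the $(A,B)$ and $(B,A)$ terms and using that $\prod_{i\in A}z_i+\prod_{i\in B}z_i-\prod_{i\in A\cup B}z_i$ is again $\{0,1\}$-valued on Boolean inputs, or by choosing a substitution (such as coordinatewise AND on a suitable index set) for which every term is automatically of the form $u_\alpha(z)u_\alpha(z')$. Second, your own count gives $\log N=\Theta(n^{1/r}\log n)$ and hence $m=2^n=\exp\bigl(\Theta((\log N)^r/(\log\log N)^{r})\bigr)$, which is weaker than the stated $\exp\bigl(c(\log N)^r/(\log\log N)^{r-1}\bigr)$ by a factor of $\log\log N$ in the exponent; the closing remark that ``optimizing the degree-versus-ground-set tradeoff sharpens this'' is asserting precisely the missing saving without supplying it. (For this paper's purposes the weaker, still superpolynomial, bound would suffice to show the failure of Lemma \ref{lem:genoddtown} modulo composites, but it does not prove the theorem as stated.)
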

\begin{proof}
See \cite{grolmusz2000superpolynomial}.
\end{proof}

This type of set family is captured in coding theory by the definition of a matching vector family.

\begin{definition} 
A \textit{matching vector family} ~\cite{bhowmick2014new} over $\mathbb{Z}^n_m$ of size $t$ is a pair of ordered lists $U=\{u_1,u_2,...,u_t\}$ and $V=\{v_1,v_2,...,v_t\}$ where $u_i,v_j\in \mathbb{Z}^n_m$ such that for all $i$, $u_i^Tv_i=0$, and for all $i\neq j$, $u_i^Tv_j \neq 0$.
\end{definition}

By taking $u_i=v_i$ to be the characteristic vector of $B_i$ above, it is clear that a family of subsets of size $m$ in Theorem \ref{thm:grolmusz} gives rise to a Matching Vector family of size $m$. Matching vector families have deep applications to many problems in coding theory, such as private information retrieval and locally decodable codes \cite{dvir2016, efremenko2012}. Understanding the maximum possible size of a matching vector family is an important open problem in coding theory.

\section{Alice requires linear space}
In this section we prove that Alice requires linear space to win the mirror game when $N$ is even. To do this, we will show that if Alice is at a specific memory state, then the possible sets of numbers that have been said at that point in time must form an (Even, Odd)-town (and hence there are at most $N$ such sets for any memory state). 

Before we get into the main proof, we will define and lower-bound the size of what we call a ``covering collection'' of subsets (this will later allow us to lower bound the total number of possible sets of numbers said by round $t$). 

\begin{definition}
A collection of subsets $\C$ of $[N]$ is \textit{$(p,r)$-covering} if:

\begin{enumerate}
\item each $S \in \C$ has $|S| = pr$.
\item for every $T \subset [N]$ with $|T| = r$, there exists an $S \in \C$ with $T \subset S$. 
\end{enumerate}
\end{definition}

\begin{lemma}\label{lem:rcovering}
Every $(p,r)$-covering collection $\C$ has size at least $\frac{\binom{N}{r}}{\binom{pr}{r}}$.
\end{lemma}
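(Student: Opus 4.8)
The plan is to use a straightforward double-counting argument on the $r$-subsets of $[N]$. The key observation is that each set $S \in \C$, having size exactly $pr$, contains precisely $\binom{pr}{r}$ subsets of size $r$. Meanwhile, the covering condition demands that \emph{every} one of the $\binom{N}{r}$ subsets $T$ of $[N]$ with $|T| = r$ be contained in at least one $S \in \C$. Counting the incidences between $r$-subsets $T$ and members $S$ of $\C$ with $T \subset S$ in two different ways will immediately produce the bound.

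First I would introduce the incidence count $I = |\{(T, S) : |T| = r,\ S \in \C,\ T \subset S\}|$. Summing over $S \in \C$ first, each $S$ contributes exactly $\binom{pr}{r}$ such pairs, since choosing an $r$-subset $T$ inside a fixed $pr$-element set $S$ can be done in $\binom{pr}{r}$ ways; hence $I = |\C| \cdot \binom{pr}{r}$. Summing over the $r$-subsets $T$ instead, the covering property guarantees that each of the $\binom{N}{r}$ choices of $T$ appears in at least one incident pair, so $I \ge \binom{N}{r}$. Combining the two expressions gives $|\C| \cdot \binom{pr}{r} = I \ge \binom{N}{r}$, and dividing through by $\binom{pr}{r}$ yields $|\C| \ge \binom{N}{r} / \binom{pr}{r}$, as desired.

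There is essentially no substantial obstacle here; the argument is a clean application of the double-counting principle, and the only facts to verify are elementary: that a $pr$-element set has exactly $\binom{pr}{r}$ subsets of size $r$, and that the covering hypothesis ensures every $r$-subset is hit at least once. The fact that some $T$ may be covered by several members of $\C$ only strengthens the inequality $I \ge \binom{N}{r}$ in the favorable direction, so no sharper accounting is needed.
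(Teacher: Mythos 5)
Your proof is correct and is essentially the same argument the paper gives (the paper's two-sentence proof is just a compressed version of your double-counting of incidences between $r$-subsets and members of $\C$). Nothing further is needed.
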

\begin{proof}
Every set in this collection contains at most $\binom{pr}{r}$ sets $T$ with cardinality $r$. There are $\binom{N}{r}$ possible sets $T$.
\end{proof}

When $p=2$, it turns out that the lower bound in Lemma \ref{lem:rcovering} is maximized when $r = N/5$.

\begin{corollary}\label{cor:rcoveringmax}
When $r = N/5$, every $(2,r)$-covering collection has size at least $2^{(\log 5 - 2)N - o(N)}$. 
\end{corollary}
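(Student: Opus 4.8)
The plan is to apply Lemma \ref{lem:rcovering} directly and then estimate the resulting ratio of binomial coefficients via Stirling's approximation. Setting $p = 2$ and $r = N/5$, Lemma \ref{lem:rcovering} tells us that every $(2, N/5)$-covering collection has size at least $\binom{N}{N/5} / \binom{2N/5}{N/5}$, so it suffices to show that this ratio is $2^{(\log 5 - 2)N - o(N)}$ (with all logarithms taken base $2$).

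To estimate the ratio, I would invoke the standard entropy estimate $\binom{N}{\alpha N} = 2^{N H(\alpha) \pm o(N)}$, where $H(\alpha) = -\alpha\log\alpha - (1-\alpha)\log(1-\alpha)$ is the binary entropy function; this is immediate from Stirling's formula. The numerator is then $\binom{N}{N/5} = 2^{N H(1/5) \pm o(N)}$, and for the denominator, since we are choosing exactly half of the ground set, $\binom{2N/5}{N/5} = 2^{(2N/5) H(1/2) \pm o(N)} = 2^{(2/5)N \pm o(N)}$. It remains only to compute $H(1/5) = \tfrac{1}{5}\log 5 + \tfrac{4}{5}\log \tfrac{5}{4} = \log 5 - \tfrac{8}{5}$, so that the exponent of the ratio is $N H(1/5) - \tfrac{2}{5}N \pm o(N) = N\bigl(\log 5 - \tfrac{8}{5}\bigr) - \tfrac{2}{5}N \pm o(N) = (\log 5 - 2)N \pm o(N)$, which is exactly the claimed bound.

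There is no genuine obstacle here: the argument is a one-line application of Lemma \ref{lem:rcovering} followed by a routine Stirling estimate, and the only care needed is in bookkeeping the $o(N)$ error terms. If one additionally wishes to justify the preceding remark that $r = N/5$ is the optimal choice, it suffices to observe that the exponent (as a function of $\alpha = r/N$) is $g(\alpha) = H(\alpha) - 2\alpha$, whose derivative $g'(\alpha) = \log\tfrac{1-\alpha}{\alpha} - 2$ vanishes precisely when $(1-\alpha)/\alpha = 4$, i.e.\ when $\alpha = 1/5$; since $g$ is concave this is the maximizer.
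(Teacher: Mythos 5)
Your proposal is correct and follows exactly the route the paper intends: the paper states this corollary without proof as an immediate consequence of Lemma \ref{lem:rcovering} with $p=2$, $r=N/5$, and the bound $\binom{N}{N/5}/\binom{2N/5}{N/5} = 2^{(\log 5 - 2)N - o(N)}$ is precisely the routine entropy/Stirling computation you carry out. Your verification that $\alpha = 1/5$ maximizes $H(\alpha) - 2\alpha$ is a nice bonus that justifies the paper's remark preceding the corollary.
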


We now prove our main theorem.

\begin{theorem}\label{thm:main}
If $N$ is even, then any winning strategy in the mirror game for Alice requires at least $(\log 5 - 2)N - o(N)$ bits of space. 
\end{theorem}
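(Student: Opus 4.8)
The plan is to connect Alice's memory states at a fixed turn to an (Even, Odd)-town, and then use the covering lower bound to show that many distinct memory states are required. Fix a turn $t$ (to be chosen near $N/5$ or a related value). For each memory state $s$ that Alice can be in at turn $t$, consider the set of all subsets $S \subseteq [N]$ such that $S$ is exactly the collection of numbers spoken so far, consistent with Alice being in state $s$. The key structural claim (which I expect is Lemma~\ref{lem:setdifference} in the paper) is that for a \emph{fixed} memory state $s$, any two such consistent ``said-so-far'' sets $S_1, S_2$ must satisfy a parity condition on $|S_1 \cap S_2|$ (or on $|S_1 \triangle S_2|$), so that the family of said-sets attached to a single state forms an (Even, Odd)-town. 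The intuition is that if two said-sets associated with the same memory state differed in an ``odd'' way, Bob could exploit the fact that Alice cannot distinguish them to steer the game into a losing position for Alice; conversely the winning guarantee forces the intersection parities to line up. By Corollary~\ref{cor:even-odd}, each memory state is therefore consistent with at most $N$ distinct said-sets.

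First I would make precise which collection of said-sets is forced to appear. The winning condition means that, over all of Bob's possible play, a large family of distinct subsets of $[N]$ must arise as the set of numbers said by turn $t$; I would argue this family is $(2,r)$-covering for an appropriate $r$. The reason is that for any small target set $T$ of size $r$, Bob has a line of play making all the elements of some superset $S \supseteq T$ of size $2r$ exactly the said numbers at turn $t$ --- so every $r$-subset is covered by some realized said-set of size $2r$. This is where $p = 2$ enters: each turn consumes two numbers (Alice's move and Bob's reply), so at turn $t = r$ exactly $2r$ numbers have been spoken, matching the $|S| = pr$ condition with $p = 2$.

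Next I would combine the two bounds. On one hand, by Corollary~\ref{cor:rcoveringmax} with $r = N/5$, the total number of realized said-sets is at least $2^{(\log 5 - 2)N - o(N)}$. On the other hand, each memory state accounts for at most $N$ of these said-sets, since the said-sets attached to a single state form an (Even, Odd)-town. Hence the number of distinct memory states Alice can occupy at turn $t$ is at least $2^{(\log 5 - 2)N - o(N)} / N = 2^{(\log 5 - 2)N - o(N)}$, absorbing the $\log N$ into the $o(N)$ term. Since the number of memory states is at most $2^m$, this forces $m \geq (\log 5 - 2)N - o(N)$ bits.

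The main obstacle is establishing the (Even, Odd)-town structure rigorously --- that is, showing that two distinct said-sets consistent with the same memory state must have intersection of odd cardinality (and are themselves of even cardinality). The even-cardinality part is easy since $2r$ is even, but the odd-intersection condition is the crux: it requires a genuine adversary argument showing that if two indistinguishable said-sets $S_1, S_2$ had $|S_1 \cap S_2|$ even, then Bob could force a collision. I would expect this to proceed by having Bob play so as to exhaust the symmetric difference in a controlled way, exploiting Alice's inability to tell $S_1$ from $S_2$, and deriving that Alice is eventually forced to repeat a number. Getting the parity bookkeeping exactly right, and confirming that the relevant constraint is ``odd intersection'' rather than some other parity relation, is the delicate step; everything downstream is a clean counting argument.
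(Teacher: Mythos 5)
Your proposal reproduces the architecture of the paper's proof exactly: realized said-sets at turn $2r$ form a $(2,r)$-covering family (via Bob playing ``say the smallest unsaid element of $T$''), the said-sets attached to one memory state form an (Even, Odd)-town, and the two bounds combine at $r = N/5$ to give $m \geq (\log 5 - 2)N - o(N)$. The counting portion is correct and complete, including the observation that $|S_1 \cap S_2| = |S_1| - |S_1 \setminus S_2|$ converts an odd set-difference condition into an odd intersection.

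However, there is a genuine gap: you state the key structural claim (the analogue of Lemma~\ref{lem:setdifference}) but do not prove it, and you explicitly defer the ``delicate step'' to an unspecified adversary argument. This lemma is the entire content of the theorem --- everything else is bookkeeping --- so the proposal as written is not a proof. The missing argument runs as follows. Suppose $S_1, S_2$ are both consistent with memory state $x$ at turn $2r$ and $|D_1| = |S_1 \setminus S_2|$ is even, and let $D = S_1 \triangle S_2$, so $|D| = 2|D_1|$. Bob plays ``say the smallest unsaid number not in $D$.'' In either case exactly $|D|/2$ elements of $D$ remain unsaid at turn $2r$, so the non-$D$ elements are exhausted at turn $2n - |D|/2$, and the parity of $|D_1|$ is precisely what places the first forced move into $D$ on Alice's turn rather than Bob's. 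Since Alice is deterministic, starts from the same state $x$ in both cases, and sees identical play from Bob until some element of $D$ is spoken, she says the \emph{same} element $d \in D$ in both cases; but $d$ lies in $D_1$ or in $D_2$, hence has already been said in case 1 or in case 2 respectively, so Alice loses in one of the two cases --- contradicting the winning guarantee. Note also that the adversary here is the one who forces the contradiction, not merely one who ``steers the game''; without pinning down the turn count $2n - |D|/2$ and its parity, one cannot conclude that Alice (rather than Bob) is the player forced into $D$, and the whole argument collapses. You should supply this argument explicitly to close the gap.
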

\begin{proof}
Fix a winning strategy for Alice. Assume this strategy uses $m$ bits of memory, and thus has $M = 2^m$ distinct states of memory.

Call a subset $S$ of $[N]$ \textit{$r$-occurring} if it is possible that immediately after turn $2r$, the set of numbers that have been said is equal to $S$. Let $\cS_r$ be the collection of all $r$-occurring sets. Before diving into the main proof, for any fixed deterministic strategy of Alice, we prove a lower bound on $\cS_r$ i.e.  the number of different subsets of numbers that could have been said in the first $2r$ turns over various strategies of Bob.

\begin{lemma}
$\cS_r$ is $(2,r)$-covering.
\end{lemma}
\begin{proof}
Since $2r$ numbers have been said immediately after turn $2r$, every set in $\cS_r$ has cardinality $2r$. We must show that for any $T \subset [N]$ with $|T| = r$, that there exists an $S$ in $\cS_r$ with $T \subset S$.

Consider the following strategy for Bob: ``say the smallest number in $T$ which has not yet been said''. Note that if Bob follows this strategy, then the set of numbers said by turn $2r$ must contain the entire set $T$. This set belongs to $\cS_r$, and it follows that $\cS_r$ is $(2,r)$-covering.
\end{proof}

Write $N = 2n$, and fix a value $r \in [n]$. For a memory state $x$ out of the $M$ possible memory states and an $r$-occurring set $S$, label $x$ with $S$ if it is possible that Alice is at memory state $x$ when the set of numbers that have been said is equal to $S$. Each state of memory may be labeled with several or none $r$-occurring sets, but each $r$-occurring set must exist as a label to some state of memory (by definition). Let $\U_{x}$ be the collection of $r$-occurring labels for memory state $x$. We want to upper bound the size of $\U_x$. Following lemma along with (Even, Odd)-town Lemma helps us in doing exactly that.

\begin{lemma}\label{lem:setdifference}
If $S_1$ and $S_2$ belong to $\U_{x}$, then $|S_1 \setminus S_2|$ is odd.
\end{lemma}
\begin{proof}
Let $D = S_1 \Delta S_2$, let $D_1 = S_1 \setminus S_2$, and let $D_2 = S_2 \setminus S_1$. Assume to the contrary that $|D_1|$ is even. Note then that $|D_2|$ is also even (since $|S_1| = |S_2| = 2r$) and that $|D| = 2|D_1|$. We'll consider two possible cases for the state of the game after turn $2r$: 1. Alice is at state $x$, and the set of numbers that have been said is $S_1$, and 2. Alice is at state $x$, and the set of numbers that have been said is $S_2$.

Consider the following strategy that Bob can play in either of these cases: ``say the smallest number that has not been said that is not in $D$''. We claim that if Bob uses this strategy, Alice will be the first person (after turn $2r$) to say an element of $D$. Note that Bob will not say an element of $D$ until turn $2n - |D|/2$; this is since:

\begin{enumerate}
\item
If we are in case 1, then all of the numbers in $D_1$ have been said but none of the numbers in $D_2$ have been said. There are therefore $|D_2| = |D|/2$ numbers in $D$ which have not been said, so Bob can avoid saying an element of $D$ until turn $2n - |D|/2$. 

\item
Likewise, if we are in case 2, the argument proceeds symmetrically.
\end{enumerate}

On the other hand, if no element of $D$ is spoken by either player between turn $2r$ and turn $2n - |D|/2$, then at turn $2n - |D|/2$, the only remaining elements belong to $D$ (the set of remaining elements is either $D_1$ or $D_2$, depending on which case we are in). If $|D_1| = |D|/2$ is even, then it is Alice's turn to speak at turn $2n - |D|/2$. It follows that Alice will be the first person after turn $r$ to say an element of $D$.

Let $y_1$ be the memory state of Alice when she first speaks an element of $D$ in case 1, and define $y_2$ similarly. We claim that $y_1 = y_2$. Indeed, since Alice's strategy is deterministic and starts from $x$ in both cases 1 and 2, Bob's strategy plays identically in both case 1 and case 2 until an element of $D$ has been spoken. It follows that Alice must speak the same element of $D$ in both cases. But if this element is in $D_1$, and they are in case 1, then this element has already been said before; similarly, if this element is in $D_2$, and they are in case 2, then this element has also been said before. Regardless of which element in $D$ Alice speaks at this state, there is some case where she loses, which contradicts the fact that Alice's strategy is successful. It follows that $|D_1|$ must be odd, as desired.
\end{proof}

\begin{claim}\label{claim:umax}
$|\U_{x}| \leq N$
\end{claim}
\begin{proof}
We claim the sets in $\U_x$ form an (Even, Odd)-town, from which this conclusion follows (Corollary \ref{cor:even-odd}). Each set in $\U_{x}$ has cardinality $2r$, so all sets have even cardinality. By Lemma \ref{lem:setdifference}, any pair of distinct sets $S_1, S_2 \in \U_x$ has odd $|S_1 \setminus S_2|$. Note that since $|S_1 \cap S_2| = |S_1| - |S_1 \setminus S_2|$, and since $|S_1|$ is even, it follows that $|S_1 \cap S_2|$ is odd, so any pair of sets have an odd cardinality intersection.
\end{proof}

Choose $r = N/5$. By Corollary \ref{cor:rcoveringmax}, $\cS_r$ must have cardinality at least $2^{(\log 5 - 2)N - o(N)}$. Since each element in $\cS_r$ belongs to at least one collection $\U_{x}$, and since each collection $\U_{x}$ has cardinality at most $N$ (Claim \ref{claim:umax}), the number $M$ of memory states $x$ is at least $2^{(\log 5 - 2)N + o(N)}/(N+1) = 2^{(\log 5 - 2)N - o(N)}$. It follows that $m = \log M \geq (\log 5 - 2)N - o(N)$, as desired. This proves Theorem \ref{thm:main}.
\end{proof}

\section{Randomized strategies for Alice}\label{sect:ub}


In this section, we consider randomized strategies for Alice. A \textit{randomized strategy computable in memory $m$} is defined similarly as in Section \ref{sec:def}, with the exception that the transition function no longer needs to be deterministic and need only be computable in $RSPACE(m)$; i.e. it must be computable by a space $m$ Turing machine with access to a $poly(N, m)$-sized read-once random tape. We say a randomized strategy wins with probability $p$ if Alice wins with probability at least $p$ against every possible strategy for Bob. 

Unfortunately, we do not know any randomized strategies with sublinear memory which win with high probability. We therefore relax the definition of randomized strategy above and consider randomized strategies where Alice has oracle access to a perfect matching $M$ chosen uniformly at random from all perfect matchings on $K_{N}$ (we assume here that $N$ is even). Calling this oracle with $x \in [N]$ returns $x$'s match $M(x)$ in this matching, and $M$ may be called arbitrarily many times during the computation of $f$. 

One way to interpret the following upper bounds is as a source of difficulty for proving strong lower bounds for randomized strategies (of the form ``you need linear space to succeed with high probability''), since a randomized strategy with memory $m$ with access to a random matching can be viewed as a convex combination of deterministic strategies weakly computable in memory $m$. This means that any attempt to prove strong lower bounds against randomized strategies must fail against this stronger class of randomized strategies. 

Another way to interpret this model of computation is as a specific case of the setting where Alice has arbitrary read access to her random tape (as opposed to read-once access). It is known \cite{nisan1990read} that having arbitrary read access to randomness is more powerful than read-once access as long as certain probabilistic space classes do not collapse. Proving a strong lower bound for randomized strategies would provide more evidence for this separation (this time in the setting of streaming games). 

We begin by showing that with only logarithmic space (and access to a random perfect matching), Alice can already win with probability $\Omega(1/N)$. In contrast, without access to a random matching, we know no logarithmic space strategy that succeeds with better than an exponentially small probability.

\begin{theorem}\label{thm:rand-log}
When $N$ is even, there exists a randomized strategy (with access to a uniformly random perfect matching $M$ on $K_{N}$) for Alice with space complexity $O(\log N)$ which succeeds with probability $\Omega(1/N)$.
\end{theorem}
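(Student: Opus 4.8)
The plan is to have Alice run the mirror strategy induced by her random matching $M$, sacrificing her very first move. Concretely, on turn $1$ Alice plays a fixed element, say $a_1 = 1$; on every subsequent turn, when Bob says $x$, Alice replies with $M(x)$ (a single oracle call). Apart from the oracle, the only thing Alice ever holds in memory is the current number being read or written, so the strategy runs in $O(\log N)$ space. Let $c = M(a_1)$ denote the ``orphaned'' partner of Alice's first move; this is the one element whose natural mirror response $M(c) = a_1$ is unavailable because $a_1$ has already been spoken.

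First I would pin down exactly when this strategy fails. I claim the invariant that after each of Alice's moves the set of spoken numbers is precisely $\{a_1\}$ together with a union of complete matched pairs (none of which is the pair $\{a_1,c\}$). Granting the invariant, whenever Bob plays an unsaid $x \neq c$ its partner $M(x)$ is also unsaid, so Alice's reply is legal and the invariant is maintained; the only way Alice can be forced to repeat is if Bob plays $c$, forcing the illegal reply $M(c) = a_1$. Moreover, if Bob never plays $c$ prematurely, then $c$ is the unique element still unspoken before the final turn $N$, so Bob is forced to play it on turn $N$ and the game ends with both players winning (Alice is never asked to respond). Hence Alice wins if and only if Bob never says $c$ before turn $N$.

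The heart of the argument is to show that no Bob strategy --- even one that adapts to the partners $M(b_1), M(b_2), \dots$ that Alice reveals as she mirrors --- can make this failure event likely. I would reveal the matching by deferred decisions: the value $M(x)$ is sampled only at the moment Bob first plays $x$, uniformly among the elements whose match has not yet been exposed. Since $a_1$'s partner is never queried by the mirror rule, the only information Bob extracts is the identity of the pairs he himself completes, which merely rules $c$ out of those pairs. Consequently, conditioned on Alice surviving Bob's first $k-1$ moves, the partner $M(b_k)$ revealed on Bob's $k$-th move is uniform over the remaining unexposed elements, one of which is $a_1$; thus Bob hits $c$ (equivalently, reveals $M(b_k) = a_1$) with conditional probability exactly $1/(N-2k+1)$, independently of which legal move $b_k$ he actually chose. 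The main obstacle is precisely this step: making rigorous that Bob's adaptivity buys him nothing, which is where the deferred-randomness / symmetry argument does the work.

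Finally I would multiply the per-turn survival probabilities. Alice is in danger only on Bob's turns $k = 1, \dots, N/2 - 1$ (on turn $N/2$ the forced move is harmless), so
\[
  \Pr[\text{Alice wins}] \;=\; \prod_{k=1}^{N/2-1}\left(1 - \frac{1}{N-2k+1}\right) \;=\; \prod_{k=1}^{N/2-1}\frac{N-2k}{N-2k+1}.
\]
After re-indexing the turns this product equals $\frac{2\cdot 4\cdots(N-2)}{3\cdot 5\cdots(N-1)}$, which a standard Stirling / Wallis estimate shows is $\Theta(1/\sqrt{N})$, comfortably $\Omega(1/N)$; a cruder telescoping bound suffices if one only wants the stated $\Omega(1/N)$. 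Combined with the $O(\log N)$ space bound from the first paragraph, this proves the theorem.
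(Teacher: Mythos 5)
Your proposal is correct, and the strategy itself is essentially the paper's (sacrifice the first move, then mirror through the random matching $M$; the paper randomizes the first element while you fix it, which is immaterial). Where you diverge is in the analysis of the failure probability. The paper invokes the symmetry of the conditioned random matching once, to argue that every non-repeating strategy of Bob has the same probability of saying the orphaned element before turn $N$; it then fixes the canonical Bob who always says the smallest unsaid number and lower-bounds Alice's success probability by the single lucky event that the orphan equals $N$, which has probability $1/N$. You instead use the same symmetry in the form of a deferred-decisions argument to compute the per-move hazard rate exactly: before Bob's $k$-th move there are $N-2k+2$ unexposed vertices including $a_1$, and whichever legal $b_k$ Bob picks, its partner is uniform over the other $N-2k+1$ unexposed vertices, exactly one of which is $a_1$; hence
\[
\Pr[\text{Alice wins}] \;=\; \prod_{k=1}^{N/2-1}\frac{N-2k}{N-2k+1} \;=\; \Theta\!\left(\frac{1}{\sqrt{N}}\right),
\]
and the crude comparison $\frac{2j}{2j+1} > \frac{j}{j+1}$ already telescopes this product to at least $2/N$, recovering the stated bound. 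Your handling of Bob's adaptivity is sound: the transcript reveals exactly the completed pairs, so the conditional law of $M$ is uniform over extensions of that partial matching and Bob's choice, being measurable with respect to the transcript, cannot bias the next revealed partner. The payoff of your route is a quantitatively stronger conclusion --- the $O(\log N)$-space strategy in fact wins with probability $\Theta(1/\sqrt{N})$ rather than merely $\Omega(1/N)$ --- which is worth noting given that the paper's $\tilde{O}(\sqrt{N})$-space upper bound and the open problem about its tightness both live at the $\sqrt{N}$ scale.
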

\begin{proof}
Consider the following strategy for Alice. She begins by sampling a uniform element $x$ from $[N]$. On her first turn, Alice says $x$. On subsequent turns, if Bob has just previously said $y$, Alice replies with $M(y)$.

Note that with this strategy, Alice wins playing against Bob if the last number said by Bob on turn $N$ is $M(x)$. This follows since $M$ is a matching, so there is no other number $y \neq M(x)$ that Bob can say where $M(y)$ has also already been said. It follows that if Bob says $M(x)$ at turn $N$, then Alice is guaranteed to win.

What is the probability Bob says $M(x)$ before turn $N$? We will show it is less than $1 - 1/N$. To do this, we first claim that any winning strategy for Bob (i.e. any strategy that doesn't repeat previously said elements) has the same probability of saying $M(x)$ before turn $N$. This follows from symmetry: since a uniformly random perfect matching conditioned on containing a submatching is still a uniformly random perfect matching on the remaining vertices, at any point in the protocol, if $M(x)$ has not been said yet, it has an equally likely chance of being any of the unsaid elements. Therefore, without loss of generality, assume Bob is playing according to the strategy where each turn he says the smallest number that has not been said so far.

Now, note that if $M(x) = N$, Bob will not say $M(x)$ before turn $N$ (there will always be a smaller unsaid element). But this happens with probability $1/N$, and therefore Alice succeeds with probability at least $1/N$.
\end{proof}

We will next show how to extend this idea to construct an $\tilde{O}(\sqrt{N})$ strategy for Alice which succeeds with high probability. To do this, we will need the following folklore result on determining missing elements from a set in a streaming setting.

\begin{definition}
The ``missing $k$ numbers problem'' is a streaming problem where a subset $S$ of cardinality $N-k$ is chosen from $[N]$, and Alice is shown the elements of $S$ one at a time, in some order. Alice's goal is to output the set $[N] \setminus S$. 
 \end{definition}

\begin{lemma}\label{lem:missing-elements}
There exists an $O(k\log N)$-space deterministic algorithm for the missing $k$ numbers problem.
\end{lemma}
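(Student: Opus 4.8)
The plan is to solve the missing $k$ numbers problem by maintaining the first $k$ power sums of the elements seen in the stream, each reduced modulo a suitably chosen prime. Concretely, let $p$ be a prime with $p > N$ (so $p = \Theta(N)$, and a single residue takes $O(\log N)$ bits). As the elements of $S$ arrive one at a time, I would maintain the $k$ quantities $P_j = \sum_{s \in S} s^j \bmod p$ for $j = 1, 2, \dots, k$. Each $P_j$ is updated in $O(1)$ arithmetic operations (each computable in $O(\log N)$ space) whenever a new element arrives, so the total space is $O(k \log N)$ as required.

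At the end of the stream, I would recover the missing set $\{m_1, \dots, m_k\} = [N] \setminus S$ by the standard Newton's-identities argument. Since Alice knows the full ground set $[N]$, she can compute the power sums $Q_j = \sum_{x \in [N]} x^j \bmod p$ of the \emph{entire} ground set (these have closed forms, or can be computed in $O(k \log N)$ space directly). Subtracting gives the power sums of the missing elements, $\pi_j = Q_j - P_j = \sum_{i=1}^k m_i^j \bmod p$, for $j = 1, \dots, k$. By Newton's identities, the power sums $\pi_1, \dots, \pi_k$ determine the elementary symmetric polynomials $e_1, \dots, e_k$ of the missing elements (over the field $\mathbb{F}_p$, where the needed divisions by $1, 2, \dots, k$ are valid since $p > N \ge k$), and hence the coefficients of the monic polynomial $\prod_{i=1}^k (X - m_i) \bmod p$. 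The missing elements are exactly the roots of this polynomial, which Alice can find by evaluating it at each of the $N$ candidate points in $[N]$ and outputting those that evaluate to $0$.

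The one subtlety to check is \textbf{correctness of root-finding modulo $p$}: I must ensure no spurious element of $[N]$ is incorrectly identified as a root. Because $p > N$, the $k$ true missing elements $m_1, \dots, m_k$ are distinct residues mod $p$, and any $x \in S$ is also a distinct residue distinct from all $m_i$; a degree-$k$ polynomial over a field has at most $k$ roots, and we have constructed it to have exactly the $m_i$ as roots, so the $N$ evaluations recover precisely the missing set with no collisions. This is the step I expect to require the most care, but it is handled cleanly by the choice $p > N$. The recovery phase happens entirely at the end and only needs the $O(k \log N)$ bits already stored (plus $O(\log N)$ working space for the arithmetic and evaluations), so the overall space bound is $O(k \log N)$, completing the proof.
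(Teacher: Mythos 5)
Your proposal is correct and follows essentially the same route as the paper's proof: maintain the first $k$ power sums modulo a prime $p > N$, subtract from the power sums of $[N]$, apply Newton's identities to obtain the elementary symmetric polynomials of the missing set, and recover the missing elements as roots of the resulting degree-$k$ polynomial by trying all candidates in $[N]$. Your explicit check that no spurious roots arise (since the polynomial has degree $k$ over $\mathbb{F}_p$ and already has the $k$ distinct missing elements as roots) is a point the paper leaves implicit, but the argument is the same.
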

\begin{proof}
See~\cite{muthukrishnan2005data}. For completeness, we include the proof here.

Choose a prime $q \in (N, 2N]$. We will perform all subsequent computations over the field $\mathbb{F}_q$. For a subset $T$ of $[N]$, define $p_i(T) = \sum_{x \in T} x^{i}$. We first claim that if $|T| \leq k$, then Alice can recover the elements of $T$ given the values of $p_i(T)$ for $1 \leq i \leq k$. 

To show this, define $e_i(T)$ to be the value of $i$th elementary symmetric polynomial over the elements of $T$ (that is, $e_i(T) = \sum_{T' \subseteq T, |T'| = i}\prod_{x \in T'} x$). Newton's identities allow us to compute (in space $O(\log N)$) the values of $e_i(T)$ for $1 \leq i \leq k$ from the values of $p_i(T)$ for $1 \leq i \leq k$ (since $k < N < q$, all of these operations are valid over $\mathbb{F}_q$). Now, each element $x$ of $T$ is a root of the polynomial

$$x^{k} - e_1(T)x^{k-1} + e_2(T)x^{k-2} - \dots + (-1)^{k}e_k(T) = 0.$$  

We can factor this in space $O(\log N)$ by trying all possible $x \in [N]$, and therefore recover the original set $T$.

To solve the missing $k$ numbers problem, Alice first computes $p_i(S)$ for each $1 \leq i \leq k$ using $O(k\log N)$ space, updating each count every time she receives a new element from $S$. This allows her to compute $p_i([N] \setminus S)$ via $p_i([N]\setminus S) = p_i([N]) - p_i(S)$. Finally, since $|[N]\setminus S| = k$, she can use the previous algorithm to recover the set $[N]\setminus S$.
\end{proof}

\begin{theorem}\label{thm:rand-sqrt}
When $N$ is even, there exists a randomized strategy (with access to a uniformly random perfect matching $M$ on $K_{N}$) for Alice with space complexity $O(\sqrt{N}\log^2 N)$ which succeeds with probability $1 - O(1/N)$.
\end{theorem}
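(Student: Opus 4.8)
The plan is to boost the $O(\log N)$ strategy of Theorem~\ref{thm:rand-log} in two ways: give Alice a pool of about $\sqrt N$ secret random ``backup'' anchors so she can recover whenever Bob destroys her current anchor, and have her maintain power sums so that she can play the final $\tilde O(\sqrt N)$ moves exactly. Throughout I set $s = k = \Theta(\sqrt N \log N)$. \textbf{Strategy.} Alice samples $k$ independent uniform backups $b_1,\dots,b_k \in [N]$ and stores them, each with a ``spoken'' flag ($O(k\log N)$ space). She also maintains, over $\mathbb{F}_q$ for a prime $q \in (N,2N]$, the power sums $p_i$ of the set of all numbers said so far for $1 \le i \le s$ ($O(s\log N)$ space), updating them whenever any number is spoken, together with a counter for the number $u$ of unspoken elements. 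In \emph{Phase 1} (while $u > s$) she keeps the invariant that after each of her moves exactly one matched pair is ``broken'' (one endpoint said), its said endpoint being the current anchor $a$, while every other pair is fully said or fully unsaid: when Bob says $y \ne M(a)$ she replies $M(y)$, and when Bob says $y = M(a)$ she picks a new anchor by scanning her list for any backup not yet spoken and playing it. In \emph{Phase 2} (once $u \le s$) she computes the power sums of the unspoken set via $p_i([N]) - p_i(\text{said})$, reconstructs the remaining $\le s$ elements using Lemma~\ref{lem:missing-elements}, stores this set $R$ explicitly, and for the rest of the game plays any element of $R$ she has not yet marked spoken. Total space is $O(\sqrt N \log^2 N)$.

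\textbf{Correctness (deterministic part).} First I would verify the Phase~1 invariant. The crucial observation is that at the moment Bob completes the anchor the state has \emph{no} broken pair, so every unspoken element has an unspoken partner; hence any backup that is merely unspoken is automatically a legal new anchor (both it and its match are free). Likewise, when Bob plays $y \ne M(a)$, this $y$ lies in a fully unspoken pair, so $M(y)$ is free and the reply is legal. Thus Alice never repeats in Phase~1, \emph{provided} that at each demanded switch at least one backup is still unspoken. In Phase~2, Lemma~\ref{lem:missing-elements} reconstructs $R$ correctly, and since $|R| \le s$ Alice can track $R$ exactly and always answer with an unspoken element, so she never repeats and the game runs to completion. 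Consequently Alice wins unless, at some switch in Phase~1, all $k$ backups have already been spoken.

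\textbf{Probability analysis.} It therefore remains to bound the single bad event: for some time with $u > s$, every backup is already spoken. Fix such a time; conditioned on Bob's transcript the spoken set is determined and has size $N - u < N - s$. The key ingredient, exactly as in the proof of Theorem~\ref{thm:rand-log}, is that a uniform matching conditioned on a revealed submatching is still uniform on the rest, so by deferred decisions each untouched backup is essentially uniform over the unspoken elements, independent of Bob's adaptive choices. Hence each backup is already spoken with probability about $1 - u/N < 1 - s/N$, and all $k$ of them are spoken with probability at most roughly $(1 - s/N)^k$; with $s = k = \Theta(\sqrt N \log N)$ this is $\exp(-\Theta(\log^2 N)) = N^{-\omega(1)}$. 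A union bound over the $O(N)$ time steps then gives total failure probability $o(1/N)$, well within the claimed $1 - O(1/N)$.

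\textbf{Main obstacle.} The delicate point is making the claim ``each untouched backup is uniform over the current unspoken set'' rigorous despite Bob's adaptivity and the feedback that Alice's own anchors (which are backups) enter the transcript and thus influence which elements are unspoken. I would handle this with a deferred-decision coupling: reveal the matching and the backups lazily, only as the play forces them, and note that any backup already used as an anchor is necessarily spoken (so such backups only help the concentration), while the identities of the not-yet-touched backups remain uniform on the unrevealed ground set given Bob's view. The same symmetry estimate also controls the Phase~1/Phase~2 boundary, since the probability that any single Bob move completes the anchor is at most $1/u$, so the total number of switches is $O(\log N)$ with high probability and never threatens to exhaust the pool before $u$ reaches $s$.
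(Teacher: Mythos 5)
Your algorithm is essentially the paper's: a hidden random pool of anchors, the mirror reply $M(y)$ on ordinary moves, a switch to a fresh anchor whenever Bob completes the current anchor's pair, and $\tilde O(\sqrt N)$ power sums so that the last $\tilde O(\sqrt N)$ elements can be reconstructed via Lemma~\ref{lem:missing-elements} and played out exactly. Your verification of the one-broken-pair invariant, the legality of every Phase~1 reply, and the Phase~2 endgame is correct, and the space accounting is fine. The divergence --- and the one genuine gap --- is in the probability analysis. You bound, for each time with $u>s$ unspoken elements, the probability that all $k$ backups are already spoken by roughly $(1-s/N)^{k-O(\log N)}$ and union-bound over times. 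This needs each not-yet-used backup to be (nearly) uniform over $[N]$ and (nearly) independent of the current spoken set, \emph{even though Bob is adaptive and the spoken set is itself a function of the backups}: every switch injects a backup into the transcript, and the identity of the backup Alice plays at a switch reveals that all backups preceding it in her scan order were already spoken at that moment. You flag this yourself as the main obstacle and gesture at a deferred-decision coupling, but you do not carry it out; as written, the per-backup factor $1-u/N$ is unjustified, and the conditioning introduced by earlier scans is not covered by your $O(\log N)$ bound on the number of switches (that bound controls backups \emph{played} as anchors, not backups \emph{revealed as spoken} by the scanning rule).

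The paper discharges exactly this obstacle by a symmetry reduction rather than a per-step union bound: since a uniform matching conditioned on a revealed submatching is uniform on the remaining vertices, and the anchor pool is uniform among unsaid elements, every legal (non-repeating) strategy of Bob induces the same distribution on the time $T$ at which the anchor pool is exhausted. One may therefore fix Bob to the canonical ``say the smallest unsaid number'' strategy, after which the process is non-adaptive and a single computation suffices: with $r=\sqrt N$ anchors and $k=\sqrt N\log N$ power sums, $\Pr[T<N-k]\le \binom{N-k}{r}/\binom{N}{r}\le (1-k/N)^{r}=O(1/N)$, because under the canonical Bob an anchor lying above $N-k$ cannot be consumed early. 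This is why the paper needs only $\sqrt N$ anchors (versus your $\Theta(\sqrt N\log N)$) and obtains failure probability $O(1/N)$ directly, with no adaptivity bookkeeping. In short: right algorithm and right deterministic invariants, but the adaptive-adversary step of your tail bound is an unfinished piece of the argument; adopting the reduction to a canonical Bob is the cleanest way to complete it.
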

\begin{proof}
Consider the following strategy for Alice. Alice begins by sampling $r = \sqrt{N}$ distinct elements $X = \{x_1, x_2, \dots, x_r\}$ uniformly at random from all $\binom{N}{r}$ subsets of $r$ elements of $N$ (and stores them in memory). 

At every point in the game, Alice uses her $O(\sqrt{N}\log^2 N) = O(r\log^2 N)$ bits of memory to keep track of: i) which elements of $X$ have already been said, and ii) the first $k = r\log N$ power sums of the elements said so far, modulo some prime $q \in (N, 2N]$ (as per the proof of Lemma \ref{lem:missing-elements}).

Alice begins by choosing a random element of $X$ and saying it. Her algorithm in later rounds is defined as follows:

\begin{enumerate}
\item If less than or equal to $k$ elements remain unchosen (i.e., this is turn $N-k$ or later), Alice uses the $O(k\log n)$-space algorithm from Lemma \ref{lem:missing-elements} to compute the remaining unsaid elements. She then says one of these elements, chosen arbitrarily.
\item Otherwise, if Bob has just said $y$, Alice checks if $M(y)$ belongs to $X$. 
\begin{itemize}
\item If it does not belong to $X$, Alice says $M(y)$. 
\item If $M(y)$ belongs to $X$ but has not been said yet, Alice says $M(y)$ and marks down $M(y)$ as having been said.
\item If $M(y)$ belongs to $X$ and has already been said, Alice chooses an element of $X$ uniformly at random from the elements of $X$ which have not yet been said (if there are no such elements, Alice gives up and says a random element). She then marks this element as having been said.
\end{itemize}
\end{enumerate}

Let $T$ be the random variable denoting the first turn in which all of the elements of $X$ have been said. We first claim that if $T \geq N-k$, then Alice succeeds. Indeed, since the algorithm of Lemma \ref{lem:missing-elements} always succeeds, Alice is guaranteed to succeed if she makes it to turn $N-k$. On the other hand, the only way for Alice to fail before turn $N-k$ is if all of the elements of $X$ have already been said, and Alice is forced to say a random element.

We now claim that with high probability, $T \geq N-k$. To see this, note that, similarly as in the proof of Theorem \ref{thm:rand-log}, any non-trivial strategy of Bob will give rise to the same distribution over $T$. This follows from the fact that at any point in the protocol, each unsaid element has the same probability of belonging to the set $X$. 

We can therefore assume that Bob is playing under the strategy where on his turn, he says the smallest number that has not been said so far. Note that under this strategy, Bob will only say numbers less than or equal to $i$ on turn $i$; it follows that if there exists any $x_i$ such that $x_i \geq N-k$, then also $T \geq N-k$. 

We can thus compute

\begin{eqnarray*}
\Pr[T < n-k] &\leq & \Pr[\forall i,\; x_i \leq N-k] \\
&= & \frac{\binom{N-k}{r}}{\binom{N}{r}} \\
&\leq & \left(\frac{N-k}{N}\right)^{r} \\
&= & \left(1 - \frac{\log N}{\sqrt{N}}\right)^{\sqrt{N}} \\
&= & O(1/N).
\end{eqnarray*}

It follows that Alice wins with probability at least $1 - O(\frac{1}{N})$, as desired.

\end{proof}

\section{The $(a,b)$-mirror game}
Consider the following generalization of the mirror game, where Alice says $a$ numbers each turn, and Bob says $b$ numbers each turn. We call this new game the \textit{$(a,b)$}-mirror game.

As with the regular $(1,1)$-mirror game, mirror strategies exist for the class of $(1, b)$-mirror games (and similarly, the $(a,1)$-mirror games).
\begin{theorem}
If $N$ is divisible by $b+1$, then Bob has a winning strategy computable in $O(\log N)$ memory for the $(1, b)$-mirror game.
\end{theorem}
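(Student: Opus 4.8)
The plan is to generalize Bob's pairing strategy from the ordinary $(1,1)$-game to a ``block-completion'' strategy. Since $N$ is divisible by $b+1$, partition $[N]$ into $m = N/(b+1)$ disjoint blocks $G_1, \dots, G_m$, each of size $b+1$ (for concreteness, take $G_j = \{(j-1)(b+1)+1, \dots, j(b+1)\}$). Bob's strategy is: whenever Alice says a number $x$, locate the block $G_j$ containing $x$ and respond with the $b$ remaining elements $G_j \setminus \{x\}$, in any order. As with the original mirror strategy, this is essentially memoryless; Bob's reply depends only on Alice's most recent move.

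First I would establish the key invariant by induction on the round number: immediately before each of Alice's turns, the set of numbers already said is exactly a union of complete blocks. The base case is the empty set before the first turn. For the inductive step, assume the invariant holds before Alice's turn. Every element of an already-used block has been said, so if Alice plays a fresh (previously unsaid) number $x$, then $x$ must lie in a block $G_j$ that is entirely unsaid. Consequently the $b$ elements of $G_j \setminus \{x\}$ that Bob plays are all fresh and distinct, so Bob never repeats a number; and after Bob's reply the block $G_j$ is completely said, which restores the invariant for the next round.

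From the invariant the theorem follows quickly. If at any point Alice repeats a number, Alice loses and Bob wins, so we may assume Alice always plays fresh numbers; in that case Bob's responses are always legal by the argument above, and Bob never loses. After $m$ rounds all $m$ blocks are complete, so all $N$ numbers have been said and both players win. Finally, I would verify the space bound: given Alice's move $x$, Bob computes the block index $j = \lceil x/(b+1) \rceil$ and enumerates $G_j \setminus \{x\}$ using only a constant number of counters of magnitude at most $N$, which is $O(\log N)$ bits.

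The only real content is the invariant, and the main point to check carefully is that Alice can never be ``in the middle'' of a partially completed block when it is her turn — this is exactly what guarantees that Bob's $b$ chosen numbers are always available and unsaid. Everything else (distinctness of Bob's numbers within a turn, termination after exactly $m$ rounds, and the space accounting) is routine.
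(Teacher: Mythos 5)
Your proof is correct and follows exactly the paper's approach: partition $[N]$ into $N/(b+1)$ consecutive blocks of size $b+1$ and have Bob complete whichever block Alice touches. The paper states this in two sentences; your block-union invariant and the space accounting are just a more careful write-up of the same argument.
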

\begin{proof}
Divide the $N$ elements of $[N]$ into $N/(b+1)$ consecutive $(b+1)$-tuples. Any time Alice says an element in a $(b+1)$-tuple, Bob says all the remaining elements in that $(b+1)$-tuple.
\end{proof}

Unlike in the $(1, 1)$-mirror game, we cannot rule out the existence of low space winning strategies for other games and other choices of $N$. In this section, we summarize what we know about low-space winning strategies for $(a,b)$-mirror games.

\subsection{$a+b$ is Prime}
We begin by considering the set of $(a,b)$-mirror games where $a+b$ is a prime, $p$. The Frankl-Wilson Lemma (Lemma \ref{lem:genoddtown}) allows us to extend some of our proof techniques from Theorem \ref{thm:main} to this case.

In the case where $a=1$, we have the following analogue to Theorem \ref{thm:main}, fully characterizing the $N$ where Bob has a low space winning strategy and where Bob requires linear space.
\begin{theorem}\label{thm:pmain}
Let $p$ be a prime. If $N$ is not divisible by $p$, then any winning strategy for Bob in the $(1, p-1)$-mirror game requires $\Omega(N)$ space.
\end{theorem}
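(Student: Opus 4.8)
The plan is to mimic the proof of Theorem~\ref{thm:main} almost verbatim, replacing the parity/$\mathbb{F}_2$ arguments with their mod-$p$ analogues and invoking the Frankl--Wilson bound (Lemma~\ref{lem:genoddtown}) in place of the (Even, Odd)-town bound. The key structural difference is that now Bob is the player whose strategy we constrain, so the roles of Alice and Bob in the state-labelling argument are swapped: Alice becomes the adversary who forces collisions. Fix a winning strategy for Bob using $m$ bits of memory, and let $p = a+b$ with $a=1$, $b=p-1$. Since Bob says $p-1$ numbers and Alice says $1$ number each turn, after $t$ full rounds exactly $pt$ numbers have been said. The natural notion is an ``$r$-occurring set'': a subset $S$ of $[N]$ of size $pr$ such that it is possible that, immediately after round $r$, the set of numbers said equals $S$ (with Bob at some memory state).

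First I would establish the covering lower bound. By an argument identical to the one in Theorem~\ref{thm:main}, letting Alice play ``say the smallest element of $T$ not yet said'' (and having Bob respond however his strategy dictates), the collection $\cS_r$ of $r$-occurring sets is $(p,r)$-covering in the sense of the earlier definition: every $r$-subset $T$ of $[N]$ is contained in some $S \in \cS_r$. Lemma~\ref{lem:rcovering} then gives $|\cS_r| \geq \binom{N}{r}/\binom{pr}{r}$, and optimizing over $r$ (analogously to Corollary~\ref{cor:rcoveringmax}, now with general $p$ in place of $2$) yields a $2^{\Omega(N)}$ lower bound whenever $p$ is a fixed constant. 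The constant in the exponent will depend on $p$, but it is strictly positive, which is all that is needed for an $\Omega(N)$ conclusion.

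Next I would prove the mod-$p$ analogue of Lemma~\ref{lem:setdifference}, which is where the hypothesis $p \nmid N$ and the primality of $p$ both enter. Label each memory state $x$ of Bob with the $r$-occurring sets consistent with $x$, and let $\U_x$ be this label collection. The claim to prove is that for distinct $S_1, S_2 \in \U_x$, the quantity $|S_1 \setminus S_2| \bmod p$ lies outside some forbidden set while $|S_1|\bmod p$ lies inside it, so that Lemma~\ref{lem:genoddtown} applies with a suitable $L$. Concretely I expect to show $|S_1 \setminus S_2| \not\equiv 0 \pmod p$: if instead $|S_1 \setminus S_2| \equiv 0$, then Alice (now the adversary) can play ``say the smallest element not in $D = S_1 \triangle S_2$'' and force the symmetric-difference elements to be spoken only near the very end; a counting argument on which player is ``on move'' when the remaining elements are exactly $D_1$ or $D_2$ shows that whoever is first to touch $D$ is forced, in one of the two cases, to repeat a number. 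The divisibility condition $p \nmid N$ is what guarantees the turn-parity bookkeeping lands on \emph{Bob}, making the collision fatal to Bob's (assumed winning) strategy and yielding the contradiction. Since every $S \in \U_x$ has $|S| = pr \equiv 0 \pmod p$, taking $L = \{0\}$ in Lemma~\ref{lem:genoddtown} gives $|\U_x| \leq \binom{N}{1} + \binom{N}{0} = N+1$ (or the appropriate polynomial bound for $s = |L| = 1$). Combining $|\cS_r| = 2^{\Omega(N)}$ with $|\U_x| \leq \mathrm{poly}(N)$ forces $2^m \geq 2^{\Omega(N)}/\mathrm{poly}(N)$, hence $m = \Omega(N)$.

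The main obstacle I anticipate is the mod-$p$ version of the turn-counting argument in Lemma~\ref{lem:setdifference}. In the original proof the symmetric difference splits evenly ($|D| = 2|D_1|$) and Bob says exactly $|D|/2$ elements before being forced; here Alice and Bob consume numbers at the asymmetric rate of $1$ versus $p-1$ per round, and $|D_1|$ need not equal $|D_2|$. I would need to track carefully, as a function of $|D_1|, |D_2| \bmod p$ and of $N \bmod p$, exactly which player is compelled to first utter an element of $D$ once all outside elements are exhausted, and verify that the ``bad'' residue class forces the compelled player to be Bob. Getting the correct forbidden set $L$ and confirming that the hypothesis $p \nmid N$ (rather than some other residue condition) is precisely what makes the argument go through is the delicate point; this is also consistent with the paper's later remark that for general $(a,b)$ with $a+b$ prime, only certain residues of $N \bmod (a+b)$ admit the lower bound.
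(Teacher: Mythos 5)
Your overall architecture matches the paper's proof exactly: label Bob's memory states with the $r$-occurring sets, show $\cS_r$ is $(p,r)$-covering via Alice playing ``smallest unsaid element of $T$'', bound $|\U_x|$ by Frankl--Wilson, and divide. Two things keep this from being a complete proof. First, the application of Lemma~\ref{lem:genoddtown} with $L=\{0\}$ is inverted: that lemma requires $|B_i| \bmod p \notin L$ and $|B_i\cap B_j| \bmod p \in L$, whereas your sets have $|S| = pr \equiv 0 \in L$ and intersections $\not\equiv 0$, so with $L=\{0\}$ both conditions fail. Writing $N = pn+k$ with $k\neq 0$, the paper proves the sharper claim $|S_1\setminus S_2|\equiv k \pmod p$, hence $|S_1\cap S_2|\equiv -k$, and takes $L=\{p-k\}$ (giving $|\U_x|\le N+1$); alternatively your weaker claim $|S_1\setminus S_2|\not\equiv 0$ would force $L=\{1,\dots,p-1\}$ and a bound of $O(N^{p-1})$, which still suffices for $m=\Omega(N)$ at constant $p$ but is not the $N+1$ bound you wrote down.

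Second, and more substantively, the heart of the argument --- the asymmetric turn-counting that shows Bob is the first player forced to touch $D = S_1\triangle S_2$ whenever $|D_1|\bmod p$ has the wrong residue --- is exactly the step you defer as ``the main obstacle I anticipate.'' The paper resolves it as follows: with $|D_1|\equiv k_1\not\equiv k$, Alice's ``avoid $D$'' strategy survives until turn $2\lfloor (N-|D_1|)/p\rfloor+1$, at which point the pool of unsaid elements has size $|D_1|+|k_1-k|-1$, of which all but $|k_1-k|-1 \le p-2$ lie in $D$; since Bob must then say $p-1$ elements, he cannot avoid $D$, and the standard fooling argument (identical memory state $y_1=y_2$, same spoken element, already said in one of the two cases) finishes. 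Note also that the hypothesis $p\nmid N$ is not what makes Bob the compelled player --- the counting forces Bob to touch $D$ for every $k_1\neq k$, including when $k=0$ --- rather it is what makes the resulting residue $-k$ nonzero so that $\U_x$ is a genuine Modtown rather than an Eventown-type family with no polynomial bound (consistent with Bob's $O(\log N)$ mirror strategy when $p\mid N$). Until that counting is carried out and the residue class is pinned down, the proof is a correct plan rather than a proof.
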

\begin{proof}
See Appendix \ref{sec:papp}.
\end{proof}

When $a>1$, our proof techniques no longer allow us completely characterize the set of $N$ where Bob requires $\Omega(N)$ space to win. Instead, we only have the following partial characterization.

\begin{theorem}\label{thm:pmainab}
Let $p$ be a prime. If $a+b = p$, and $N \bmod p \in \{ a, a+1, \dots, p-1 \}$, then any winning strategy for Bob in the $(a, b)$-mirror game requires $\Omega(N)$ space.
\end{theorem}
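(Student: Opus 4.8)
The plan is to replicate the proof of Theorem \ref{thm:main}, replacing the modulus $2$ by the prime $p = a+b$, the Oddtown bound (Corollary \ref{cor:even-odd}) by the Frankl-Wilson bound (Lemma \ref{lem:genoddtown}), and tracking Bob's memory states instead of Alice's (so that Alice becomes the adversarial player). Fix a winning strategy for Bob using $m$ bits, hence $M = 2^m$ states. Call $S \subseteq [N]$ \emph{$r$-occurring} if, for some play of Alice against Bob's fixed strategy, $S$ is exactly the set of numbers said immediately after round $r$ (i.e. after $pr$ numbers are spoken), and let $\cS_r$ be the collection of all such sets. Each $S \in \cS_r$ has $|S| = pr$. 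For any $T$ with $|T| = r$, let Alice play ``say the smallest unsaid element of $T$, and once $T$ is exhausted the smallest unsaid element overall''; since Alice speaks $a \ge 1$ numbers per round, all of $T$ is said within $r$ rounds, so the resulting $S \in \cS_r$ contains $T$. Thus $\cS_r$ is $(p,r)$-covering, and by Lemma \ref{lem:rcovering} has size at least $\binom{N}{r}/\binom{pr}{r}$; choosing $r = \alpha N$ for a small enough constant $\alpha < 1/(2p)$ makes this $2^{\Omega(N)}$, exactly as in Corollary \ref{cor:rcoveringmax}.

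For each memory state $x$ of Bob, let $\U_x$ be the collection of $r$-occurring sets $S$ such that Bob can be in state $x$ the moment $S$ has been said. Every $S \in \U_x$ has $|S| = pr \equiv 0 \pmod p$. The key claim is that for distinct $S_1, S_2 \in \U_x$ we have $|S_1 \setminus S_2| \not\equiv 0 \pmod p$; granting this, $|S_1 \cap S_2| = pr - |S_1 \setminus S_2| \not\equiv 0 \pmod p$, so $\U_x$ is a $(p,L)$-Modtown with $L = \{1, \dots, p-1\}$, and Lemma \ref{lem:genoddtown} gives $|\U_x| \le \sum_{i=0}^{p-1}\binom{N}{i} = O(N^{p-1})$. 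Since every $r$-occurring set lies in some $\U_x$, we conclude $M \ge |\cS_r| / O(N^{p-1}) = 2^{\Omega(N)}$, hence $m = \Omega(N)$.

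It remains to prove the set-difference claim, which is where the hypothesis enters and which I expect to be the main obstacle. Write $D = S_1 \Delta S_2$, $D_1 = S_1 \setminus S_2$, $D_2 = S_2 \setminus S_1$, so $|D_1| = |D_2| \ge 1$ since $|S_1| = |S_2|$. Suppose for contradiction $|D_1| \equiv 0 \pmod p$, and consider the two post-round-$r$ states in which Bob sits at $x$ and the said set is $S_1$ (case 1) or $S_2$ (case 2); in both, the unsaid non-$D$ elements form the set $[N] \setminus (S_1 \cup S_2)$, so they are identical. Let Alice adversarially play ``say the smallest unsaid element not in $D$.'' These non-$D$ unsaid elements number $N - pr - |D_1|$, so they are exhausted exactly after $N - |D_1|$ numbers have been spoken; since Alice never voluntarily enters $D$, the first $D$-element is spoken no earlier than position $N - |D_1| + 1$. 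The within-round index of that position is $((N - |D_1|)\bmod p) + 1 = (N \bmod p) + 1$ using $|D_1| \equiv 0$, and by hypothesis $N \bmod p \in \{a, \dots, p-1\}$ this index lies in $\{a+1, \dots, p\}$ — precisely a slot assigned to Bob. Hence Bob is the first player forced into $D$ (the same holds if Bob's fixed strategy voluntarily enters $D$ at an earlier Bob-slot). Because Alice plays identically in both cases and Bob is deterministic from $x$, the plays coincide until the first $D$-element, so Bob speaks the \emph{same} element $d$ in both cases; if $d \in D_1$ he repeats in case 1, and if $d \in D_2$ he repeats in case 2, contradicting that Bob's strategy wins. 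The delicate part is purely this timing computation: accounting correctly for whose turn it is when the non-$D$ elements run out, and verifying that $N \bmod p \ge a$ is exactly the condition placing that turn among Bob's $b$ moves rather than Alice's $a$ moves. The covering and Frankl-Wilson steps are otherwise routine adaptations of the $p = 2$ argument.
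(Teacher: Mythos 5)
Your proposal is correct and follows essentially the same route as the paper: a $(p,r)$-covering lower bound on the occurring sets, a Frankl--Wilson $(p,\{1,\dots,p-1\})$-Modtown bound on each $\U_x$, and the adversarial ``smallest unsaid element not in $D$'' strategy with the same within-round timing computation showing that $N \bmod p \ge a$ forces the first $D$-element into one of Bob's $b$ slots. The only (immaterial) difference is that the paper's set-difference lemma excludes all residues in $\{0,1,\dots,k-a\}$ for $|S_1\setminus S_2| \bmod p$, whereas you exclude only the residue $0$ --- but the Modtown bound uses only the exclusion of $0$ in both write-ups, so your leaner claim suffices.
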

\begin{proof}
See Appendix \ref{sec:papp}.
\end{proof}

\subsection{$a+b$ is Composite}
The failure of the Frankl-Wilson lemma (Lemma \ref{lem:genoddtown}) to hold modulo composite numbers prevents us from directly applying our proof techniques in this case. Gromulz's construction (Theorem \ref{thm:grolmusz}) shows that there exist $(m,L)$-Modtown families containing a superpolynomial in $N$ number of sets when $m$ is composite. 

However, a sufficiently small superpolynomial upper bound on the size of such a family would still allow us to prove the analogue of Theorem \ref{thm:pmain}. In fact, any upper bound of $F$ on the size of such a family leads to a lower bound of $\log(2^{\Omega(n)}/F) = \Omega(n) - \log F$ on Bob's memory. This implies the following ``contrapositive'' to Theorem \ref{thm:pmain}:

\begin{theorem}\label{thm:mv} If Bob has a $o(N)$-space winning strategy in a $(1,m-1)$ game when $N\equiv k\not\equiv 0\mod m$, then there exists $2^{\Omega(N)}$ sized matching vector families over $\mathbb{Z}^N_m$ (for sufficiently large $N$). 
\end{theorem}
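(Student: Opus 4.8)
The plan is to argue the contrapositive of the mechanism behind Theorem \ref{thm:pmain}: a low-space winning strategy for Bob forces the sets of numbers spoken at a common memory state to form an enormous modulo-$m$ Oddtown, which in turn yields a matching vector family of the same size. Throughout I fix $m$, restrict to $N \equiv k \not\equiv 0 \pmod m$, and suppose Bob has a winning strategy using $M = 2^{o(N)}$ memory states. I examine Bob's memory immediately after round $r$ (after $rm$ numbers have been said, so it is about to be Alice's turn). Call a set $S \subseteq [N]$ with $|S| = rm$ \emph{$r$-occurring} if it can be the set of spoken numbers at that point, let $\cS_r$ be the collection of all such sets, and for a memory state $x$ let $\U_x$ be those $r$-occurring sets compatible with Bob being in state $x$.

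First I would establish the two counting facts. Exactly as in the proof of Theorem \ref{thm:main}, $\cS_r$ is $(m, r)$-covering: every $S \in \cS_r$ has size $rm$, and for any $T$ with $|T| = r$ the adversary Alice can play ``say the smallest element of $T$ not yet said,'' which guarantees all of $T$ has been spoken by round $r$. By Lemma \ref{lem:rcovering}, $|\cS_r| \ge \binom{N}{r}/\binom{mr}{r}$, and choosing $r = cN$ for a small constant $c = c(m) > 0$ makes this $2^{\Omega(N)}$ (the exponent is $N[H(c) - mc\,H(1/m)] + o(N)$, which is positive for small $c$ since $H(c)/c \to \infty$ as $c \to 0$). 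Since every $r$-occurring set lies in some $\U_x$ and there are only $M = 2^{o(N)}$ states, pigeonholing produces a state $x$ with $|\U_x| \ge 2^{\Omega(N)}/2^{o(N)} = 2^{\Omega(N)}$.

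The heart of the argument is the modular analogue of Lemma \ref{lem:setdifference}: for any $S_1, S_2 \in \U_x$, $|S_1 \cap S_2| \not\equiv 0 \pmod m$. I would prove it as in Lemma \ref{lem:setdifference}, with Alice now the adversary. Set $D = S_1 \Delta S_2$, $D_1 = S_1 \setminus S_2$, $D_2 = S_2 \setminus S_1$, and $d = |D_1| = |D_2|$. Let Alice play ``say the smallest unsaid number not in $D$''; since $S_1$ and $S_2$ agree outside $D$, this move is identical in both the ``true set is $S_1$'' and ``true set is $S_2$'' worlds, so Bob, being deterministic and starting from the same state $x$, replays identically until some element of $D$ is first spoken. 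If Bob ever speaks a $D$-element first, it is the same element in both worlds and hence a repeat in one of them, contradicting that Bob wins; so Bob avoids $D$ and the non-$D$ elements are used up first, forcing the first $D$-element at position $N - d + 1$. This position is one of Bob's $m-1$ sub-turns precisely when $N - d + 1 \not\equiv 1 \pmod m$, i.e. when $d \not\equiv k \pmod m$, in which case Bob is forced into $D$ first — a contradiction. Hence $d \equiv k \pmod m$, so $|S_1 \cap S_2| = rm - d \equiv -k \not\equiv 0 \pmod m$, while every $|S_i| = rm \equiv 0$.

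Finally, taking $u_i = v_i$ to be the characteristic vector over $\mathbb{Z}_m^N$ of the $i$-th set in the large $\U_x$ gives $u_i^T v_i = |S_i| \equiv 0$ and $u_i^T v_j = |S_i \cap S_j| \equiv -k \neq 0$, i.e. a matching vector family of size $|\U_x| = 2^{\Omega(N)}$, as desired. I expect the main obstacle to be the turn-counting in this modular analogue of Lemma \ref{lem:setdifference}: pinning down that the first forced $D$-element lands on one of Bob's sub-turns exactly when $d \not\equiv k \pmod m$, and verifying that Alice's avoid-$D$ play (and hence Bob's replay) is genuinely identical across the two worlds until a $D$-element appears. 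Everything else reduces to the covering bound and the pigeonhole already used for Theorem \ref{thm:main}.
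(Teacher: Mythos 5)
Your proposal is correct and follows essentially the same route as the paper: the paper's proof of Theorem \ref{thm:mv} simply observes that Lemmas \ref{lem:cov} and \ref{lem:psetdifference} from the proof of Theorem \ref{thm:pmain} never use primality of the modulus, pigeonholes to find a memory state $x'$ with $|\mathcal{U}_{x'}| \geq 2^{\Omega(N)}$, and takes characteristic vectors $u_i = v_i$ exactly as you do. Your inline re-derivation of the covering bound and of the modular set-difference lemma (with the position-$(N-d+1)$ parity count) matches the paper's cited machinery, so there is nothing substantive to add.
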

\begin{proof}
See Appendix \ref{sec:papp}.
\end{proof}

\section{Open Problems}

Our understanding of the space complexity aspects of mirror games is still very rudimentary. We conclude by listing some open problems we find interesting.

\begin{enumerate}
\item \textbf{When do low-space winning strategies exist?} Does either Alice or Bob ever have a low-space winning strategy for any $(a,b)$ game when $a$ and $b$ are both larger than $1$ (e.g., the $(2,2)$ game)? Are there any cases where the best deterministic winning strategy has space complexity strictly between $O(\log N)$ and $O(N)$ (e.g. $O(\sqrt{N})$), or does the best winning strategy always fit into one of these two extremes? Is it ever the case that both Alice and Bob simultaneously have $O(\log N)$ winning strategies?

\item \textbf{Beating low-space with low-space.} In order to show that Alice needs $\Omega(N)$ space, our adversarial strategy for Bob also requires $\Omega(N)$ space. Given a deterministic low-memory $m$ strategy for Bob, is there a low-memory strategy for Alice which wins against it?

\item \textbf{Lower bounds for randomized strategies.} Can we prove any sort of lower bound against randomized strategies that win with high probability? What about against randomized strategies with additional power, such as access to a uniformly chosen matching or with multiple read access to the random tape? Is our upper bound of $\tilde{O}(\sqrt{N})$ tight in these contexts?

\item \textbf{Composite $a+b$.} Is it possible to show some sort of converse to Theorem \ref{thm:mv}, that any large enough matching vector family gives rise to a low space strategy for Bob? Or is it possible to show linear space lower bounds for composite $a+b$ via some other approach?

\item \textbf{More general mirror games.} One of the great successes of the theory of combinatorial games \cite{berlekamp2003winning} is that its techniques apply to essentially all sequential two-player games with perfect information. Is there a more general class of games beyond the family of $(a,b)$-mirror games with similar space complexity phenomena? One possible other family of such games with mirror strategies comes from a generalization of the combinatorial game ``Cram'', where Alice and Bob take turn placing dominoes (more generally, an element of some set of ``symmetric'' polyominos) onto a $w$ by $h$ grid (more generally, some high dimensional grid, or any ``symmetric'' subset of a high-dimensional grid). For example, the original mirror game can be thought of as Cram on a $1$ by $N$ board, where players take turn placing unit squares. Can we say anything about the space complexity of playing Cram, or its generalizations? 
\end{enumerate}

\bibliographystyle{alpha}
\bibliography{bibliography}
\appendix
\section{Omitted proofs}\label{sec:papp}

\subsection*{Proof of Theorem \ref{thm:pmain}}

\begin{proof}
We proceed similarly to the proof of Theorem \ref{thm:main}. As before, call a subset $S$ of $[N]$ $r$-occurring if it is possible that immediately after turn $2r$, the set of numbers that have been said is equal to $S$. Let $\cS_r$ be the collection of all $r$-occurring sets.
\begin{lemma}\label{lem:cov}
$\cS_r$ is $(p, r)$-covering. 
\end{lemma}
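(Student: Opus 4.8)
The plan is to mirror the proof of the $(2,r)$-covering lemma from Theorem~\ref{thm:main} almost verbatim, adapting it to the $(1,p-1)$ game and noting the one essential change: since Theorem~\ref{thm:pmain} lower-bounds \emph{Bob}'s memory, here we fix Bob's deterministic winning strategy and let the collection $\cS_r$ range over all of \emph{Alice}'s possible plays. Thus it is now Alice, rather than Bob, who plays the role of the adversary that forces a prescribed set $T$ to be covered. Throughout I will read ``immediately after turn $2r$'' as ``immediately after $r$ complete rounds'', where a round consists of Alice's single move followed by Bob's $p-1$ moves; for $p=2$ this is exactly the $2r$ individual moves of the original game, so the two statements coincide in the base case.

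First I would verify condition~1 of the $(p,r)$-covering definition. In each round Alice says $1$ number and Bob says $p-1$, so after $r$ rounds exactly $r + r(p-1) = pr$ distinct numbers have been said (distinctness because Bob's strategy is winning and hence never repeats, while Alice plays only unsaid numbers). Hence every $S \in \cS_r$ has $|S| = pr$.

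Next I would establish condition~2. Fix any $T \subseteq [N]$ with $|T| = r$ and consider the Alice strategy ``on each round, say the smallest element of $T$ that has not yet been said (and an arbitrary unsaid number once all of $T$ has been said).'' The key claim is that against the fixed Bob strategy this play has all of $T$ said within $r$ rounds: each of Alice's first $r$ moves either covers a fresh element of $T$ or occurs after $T$ is already fully covered, so the number of covered elements of $T$ increases by at least one per round until it reaches $|T| = r$. Since Alice has exactly $r$ moves in $r$ rounds, the resulting set $S$ of numbers said after round $r$ satisfies $T \subseteq S$, and $S \in \cS_r$ by definition. This shows $\cS_r$ is $(p,r)$-covering.

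The main thing to check carefully --- and really the only obstacle --- is that this greedy strategy is \emph{tight}: Alice must cover $r$ elements using exactly $r$ moves, with no slack, so I must argue she never wastes a move on $T$. This is guaranteed by the greedy choice (every move of hers either covers a previously uncovered element of $T$, or else $T$ is already complete), and it is precisely here that Bob cannot interfere: any element of $T$ that Bob happens to say only accelerates the coverage and can never undo it. I would also confirm that the play remains legal for all $r$ rounds, i.e.\ that $pr \le N$ so the game does not terminate before round $r$; this holds for the value of $r$ selected later in the argument (the analogue of the optimized choice obtained from Lemma~\ref{lem:rcovering}).
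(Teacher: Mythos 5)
Your proposal matches the paper's own proof: the paper likewise notes that $pr$ numbers have been said after $r$ rounds (giving condition 1) and has Alice, as the adversary against Bob's fixed strategy, play ``say the smallest number in $T$ which has not yet been said'' to force $T \subseteq S$ for some $S \in \cS_r$. The extra care you take about the meaning of ``turn $2r$'', the no-wasted-moves count, and legality is correct but only makes explicit what the paper leaves implicit.
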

\begin{proof}
Since $pr$ numbers have been said immediately after turn $2r$, every set in $\cS_r$ has cardinality $pr$. We must show that for any $T \subset [N]$ with $|T| = r$, that there exists an $S$ in $\cS_r$ with $T \subset S$.

Consider the following strategy for Alice: ``say the smallest number in $T$ which has not yet been said''. Note that if Alice follows this strategy, then the set of numbers said by turn $2r$ must contain the entire set $T$. This set belongs to $\cS_r$, and it follows that $\cS_r$ is $(p,r)$-covering.
\end{proof}

Similar to Corollary \ref{cor:rcoveringmax}, we can choose $r$ such that every $(p,r)$-covering collection has size at least $2^{\Omega(n)}$. Here, we think of $p$ as constant. \\

Write $N = pn+k$, $k\neq 0$, and fix a value $r \in [n]$. For a memory state $x$ out of the $M$ possible memory states of Bob and an $r$-occurring set $S$, label $x$ with $S$ if it is possible that Bob is at memory state $x$ when the set of numbers that have been said is equal to $S$. Each state of memory may be labelled with several $r$-occurring sets or none, but each $r$-occurring set must exist as a label to some state of memory (by definition). Let $\U_{x}$ be the collection of $r$-occurring labels for memory state $x$.

\begin{lemma}\label{lem:psetdifference}
If $S_1$ and $S_2$ belong to $\U_x$, then $|S_1 \setminus S_2| \equiv k \bmod p$.
\end{lemma}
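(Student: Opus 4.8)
The plan is to adapt the proof of Lemma~\ref{lem:setdifference} to the $(1,p-1)$ game, with the crucial change that the two players swap roles: here it is \emph{Bob's} winning strategy (and hence Bob's memory) that we are analyzing, so Alice now plays the adversarial line. Fix $S_1,S_2\in\U_x$ and set $D=S_1\Delta S_2$, $D_1=S_1\setminus S_2$, $D_2=S_2\setminus S_1$. Since $|S_1|=|S_2|=pr$ we have $|D_1|=|D_2|=:d$. Assume for contradiction that $d\not\equiv k\pmod p$; I would then exhibit a single adversarial line of play for Alice under which Bob's deterministic strategy is forced to lose in at least one of the two cases (i) Bob at state $x$ with said-set $S_1$, and (ii) Bob at state $x$ with said-set $S_2$.

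As in the original proof, Alice plays ``say the smallest unsaid element that is not in $D$.'' Outside $D$ the two said-sets agree (both equal $S_1\cap S_2$ immediately after turn $2r$), so as long as no element of $D$ has yet been spoken, the set of available non-$D$ elements is identical in cases (i) and (ii). Hence Alice's moves coincide, Bob receives identical inputs, and since he starts from the same state $x$ his deterministic moves also coincide: the two plays are coupled until the first element of $D$ is spoken. The heart of the argument is to show that this first element of $D$ must be spoken by Bob.

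The key counting is the following. Immediately after turn $2r$ the number of unsaid elements lying outside $D$ is $A=N-|S_1\cup S_2|=N-pr-d=p(n-r)+(k-d)$, so $A\equiv k-d\pmod p$, which is nonzero by assumption. Writing $A=pm'+s$ with $s\in\{1,\dots,p-1\}$, both players can avoid $D$ for exactly $m'$ further rounds (each round consuming one element from Alice and $p-1$ from Bob). After these rounds $s\ge 1$ non-$D$ elements remain; Alice spends one on her turn, leaving $s-1\le p-2$ non-$D$ elements, strictly fewer than the $p-1$ that Bob must then emit. Thus Bob is forced to say at least one element of $D$ on this turn, and since Alice always had a non-$D$ element available up to this point, Bob is necessarily the \emph{first} to enter $D$ (if his strategy entered $D$ voluntarily earlier he is of course still the first). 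The coupling then yields the contradiction exactly as in Lemma~\ref{lem:setdifference}: Bob plays the same element(s) of $D$ in both cases, any element he plays from $D_1$ is a repeat in case (i) and any element from $D_2$ is a repeat in case (ii), so he loses in at least one case, contradicting that his strategy is winning. Hence $d=|S_1\setminus S_2|\equiv k\pmod p$.

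The step I expect to require the most care is precisely this counting that pins down who overflows into $D$ first. Unlike the $p=2$ analysis, where one balances a single parity and explicitly determines whose turn it is, here the asymmetry that Bob must emit $p-1$ numbers in one move is exactly what forces \emph{him} (rather than Alice) to spill into $D$, and it works only because $A\not\equiv 0\pmod p$ leaves a remainder $s$ that is large enough for Alice's single move but too small for Bob's block of $p-1$. One should also check the bookkeeping that this critical round occurs before the game ends — the said-count right after Alice's move there is $N-d-s+1\le N$ — so that the forced move genuinely takes place. Granting the lemma, the sets of $\U_x$ all have cardinality $pr\equiv 0\pmod p$ and pairwise intersections $pr-d\equiv -k\not\equiv 0\pmod p$, so they form a $(p,\{(-k)\bmod p\})$-Modtown; Lemma~\ref{lem:genoddtown} then gives $|\U_x|\le N+1$, which plugs into the covering/counting argument of Theorem~\ref{thm:main} to produce the desired $\Omega(N)$ lower bound on Bob's memory.
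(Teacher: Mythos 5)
Your proof is correct and follows essentially the same route as the paper's: the adversarial Alice strategy ``smallest unsaid element not in $D$,'' the count of unsaid non-$D$ elements whose residue $k-d\not\equiv 0\pmod p$ forces Bob to be the first to spill into $D$, and the coupling of the two runs launched from the shared memory state $x$. Your bookkeeping of the overflow round via $A=pm'+s$ is in fact cleaner and more explicit than the paper's own phrasing of the same counting step.
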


\begin{proof}
Let $D = S_1 \Delta S_2$, let $D_1 = S_1 \setminus S_2$, and let $D_2 = S_2 \setminus S_1$. Assume to the contrary that $|D_1|\equiv k_1\not\equiv k\mod p$. Note then that $|D_2|=|D_1|$ (since $|S_1| = |S_2|$) and that $|D| = 2|D_1|$. We'll consider two possible cases for the state of the game after turn $2r$: 1. Bob is at state $x$, and the set of numbers that have been said is $S_1$, and 2. Bob is at state $x$, and the set of numbers that have been said is $S_2$. 

Consider the following strategy that Alice can play in either of these cases: ``say the smallest number that has not been said that is not in $D$''. We claim that if Alice uses this strategy, Bob will be the first person (after turn $2r$) to say an element of $D$. Note that Alice will not say an element of $D$ until turn $2\lfloor\frac{N- |D_1|}{p}\rfloor+1$; this is since:

\begin{enumerate}
\item
If we are in case 1, then all of the numbers in $D_1$ have been said but none of the numbers in $D_2$ have been said. There are therefore $|D_2| = |D_1|$ numbers in $D$ which have not been said, so Alice can avoid saying an element of $D$ until turn $2\lfloor\frac{N- |D_1|}{p}\rfloor+1$ as the number of numbers remaining after Alice's turn would be $|D_1|+|k_1-k|-1\ge |D_1|$ (as $k_1\neq k$). 

\item
Likewise, if we are in case 2, the argument proceeds symmetrically.
\end{enumerate}

On the other hand, if no element of $D$ is spoken by either player between turn $2r$ and turn $2\lfloor\frac{N- |D_1|}{p}\rfloor+1$, then at turn $2\lfloor\frac{N- |D_1|}{p}\rfloor+2$, Bob has to say $p-1$ elements from $|D_1|+|k_1-k|-1$ elements and hence, there would be intersection with elements of $D$. 

Let $y_1$ be the memory state of Bob when he first speaks an element of $D$ in case 1, and define $y_2$ similarly. We claim that $y_1 = y_2$. Indeed, since Bob's strategy is deterministic and starts from $x$ in both cases 1 and 2, Alice's strategy plays identically in both case 1 and case 2 until an element of $D$ has been spoken. It follows that Bob must speak the same element of $D$ in both cases. But if this element is in $D_1$, and they are in case 1, then this element has already been said before; similarly, if this element is in $D_2$, and they are in case 2, then this element has also been said before. Regardless of which element in $D$, Bob speaks at this state, there is some case where he loses, which contradicts the fact that Bob's strategy is successful. It follows that $|D_1|\equiv k\bmod p$, as desired.
\end{proof}

\begin{claim}\label{claim:pumax}
$|\U_{x}| \leq N+1$.
\end{claim}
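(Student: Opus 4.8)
The plan is to mimic the structure of Claim \ref{claim:umax}, but to invoke the Frankl-Wilson Lemma (Lemma \ref{lem:genoddtown}) in place of the (Even, Odd)-town bound. Concretely, I would argue that the sets in $\U_x$ form a $(p,L)$-Modtown for a suitable singleton set $L$, and then read off the bound $N+1$ directly from Lemma \ref{lem:genoddtown} with $s=1$.

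First I would record the two modular conditions that the sets in $\U_x$ satisfy. Every $S \in \U_x$ arises as the set of $pr$ numbers said immediately after turn $2r$, so $|S| = pr$ and hence $|S| \equiv 0 \bmod p$. For distinct $S_1, S_2 \in \U_x$, Lemma \ref{lem:psetdifference} gives $|S_1 \setminus S_2| \equiv k \bmod p$; since $|S_1 \cap S_2| = |S_1| - |S_1 \setminus S_2| = pr - |S_1 \setminus S_2|$, this yields $|S_1 \cap S_2| \equiv -k \bmod p$.

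Next I would set $L = \{-k \bmod p\}$, a set of $s = 1$ integer. The hypothesis $k \neq 0$ (from the decomposition $N = pn + k$) is exactly what guarantees $0 \notin L$, so the cardinality condition $|S| \equiv 0 \notin L$ holds, while the intersection condition $|S_1 \cap S_2| \equiv -k \in L$ holds by the computation above. Thus $\U_x$ is a $(p,L)$-Modtown, and Lemma \ref{lem:genoddtown} gives $|\U_x| \leq \sum_{i=0}^{1}\binom{N}{i} = \binom{N}{0} + \binom{N}{1} = N+1$.

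I do not expect any serious obstacle here: all the genuinely combinatorial work has been absorbed into Lemma \ref{lem:psetdifference}, and the remainder is a bookkeeping translation into the hypotheses of Frankl-Wilson. The one point requiring care is verifying that $0 \notin L$, which is where the assumption $k \neq 0$ (equivalently, $p \nmid N$) is essential; this is precisely the modular analogue of the parity assumption that drives Theorem \ref{thm:main}.
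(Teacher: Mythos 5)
Your proposal is correct and matches the paper's own proof essentially verbatim: the paper also shows $\U_x$ is a $(p,\{p-k\})$-Modtown (your $L=\{-k \bmod p\}$ is the same set) via $|S|=pr\equiv 0$ and $|S_1\cap S_2|\equiv -k\not\equiv 0 \bmod p$, and then applies Lemma \ref{lem:genoddtown} with $s=1$. Your extra remark that $k\neq 0$ is what keeps $0\notin L$ is a worthwhile clarification but not a departure from the paper's argument.
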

\begin{proof}
We claim the sets in $\U_x$ form a $(p,\{p-k\})$-Modtown, from which this conclusion follows (Lemma \ref{lem:genoddtown}). Each set in $\U_{x}$ has cardinality $pr\equiv 0\mod p$. By Lemma \ref{lem:psetdifference}, any pair of distinct sets $S_1, S_2 \in \U_x$ has $|S_1 \setminus S_2|\equiv k\mod p$. Note that since $|S_1 \cap S_2| = |S_1| - |S_1 \setminus S_2|$, it follows that $|S_1 \cap S_2|\equiv -k\not\equiv 0\mod p$.\end{proof}

We established that for some $r$, $\cS_r$ must have cardinality at least $2^{\Omega(N)}$. Since each element in $\cS_r$ belongs to at least one collection $\U_{x}$, and since each collection $\U_{x}$ has cardinality at most $N+1$ (Claim \ref{claim:pumax}), the number $M$ of memory states $x$ is at least $2^{\Omega(N)}/(N+1) $. It follows that $m = \log M \geq \Omega(N)$, as desired. This proves Theorem \ref{thm:pmain}.
\end{proof}

\subsection*{Proof of Theorem \ref{thm:pmainab}}

\begin{proof}
We proceed similarly to the proof of Theorem \ref{thm:pmain}. As before, call a subset $S$ of $[N]$ $r$-occurring if it is possible that immediately after turn $2r$, the set of numbers that have been said is equal to $S$. Let $\cS_r$ be the collection of all $r$-occurring sets. As before, $\cS_r$ is $(p, r)$-covering and we can choose $r$ such that $|\cS_r|\ge 2^{\Omega(n)}$.\\

Write $N = pn+k$, $k\in \{a,a+1,...,p-1\}$, and fix a value $r \in [n]$. For a memory state $x$ out of the $M$ possible memory states of Bob and an $r$-occurring set $S$, label $x$ with $S$ if it is possible that Alice is at memory state $x$ when the set of numbers that have been said is equal to $S$. Each state of memory may be labelled with several $r$-occurring sets or none, but each $r$-occurring set must exist as a label to some state of memory (by definition). Let $\U_{x}$ be the collection of $r$-occurring labels for memory state $x$.

\begin{lemma}\label{lem:psetdifference}
If $S_1$ and $S_2$ belong to $\U_x$, then $|S_1 \setminus S_2| \bmod p\in\{k-a+1,...,k-1,k,..,p-1\}$.
\end{lemma}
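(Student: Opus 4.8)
The plan is to follow the adversarial argument from the proof of Theorem~\ref{thm:pmain} (the $a=1$ case), generalizing it to the situation where Alice places $a$ numbers on each of her moves. Write $D_1 = S_1 \setminus S_2$, $D_2 = S_2 \setminus S_1$, and $D = S_1 \Delta S_2$. Since $|S_1| = |S_2| = pr$ we have $|D_1| = |D_2|$; set $d = |D_1|$ and $k_1 = d \bmod p$. Suppose for contradiction that the conclusion fails, so that $k_1$ lies in the complement $\{0, 1, \dots, k-a\}$ of the claimed residue set. I would then consider the two states of the game immediately after turn $2r$: (case~1) Bob is in memory state $x$ with the set of said numbers equal to $S_1$, and (case~2) Bob is in state $x$ with said numbers $S_2$. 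In both cases I let Alice play the adversarial strategy ``on each of your $a$ moves, say the smallest number not in $D$ that has not yet been said.'' Because this strategy only ever speaks unsaid numbers outside $D$, Alice herself never repeats, so she can only lose if at some point she has fewer than $a$ unsaid non-$D$ numbers available.

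The heart of the proof is a counting step showing that under this strategy Bob, and not Alice, is the first player after turn $2r$ forced to say an element of $D$. I would track the number of remaining non-$D$ numbers, which immediately after turn $2r$ equals $p(n-r) + k - d$ and is therefore congruent to $s := k - k_1 \pmod p$. The hypothesis $k_1 \le k - a$ gives $s = k - k_1 \ge a$, while $k \le p-1$ gives $s \le k < p$, so $s \in \{a, \dots, k\}$ exactly. Writing this non-$D$ count as $pq + s$, in each of the first $q$ rounds both players can consume $p$ non-$D$ numbers, and at the start of round $q+1$ exactly $s \ge a$ non-$D$ numbers remain, so Alice can still complete all $a$ of her moves outside $D$. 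Afterwards only $s - a < p - a = b$ non-$D$ numbers remain, forcing Bob to say an element of $D$ on that round. (If Bob had instead spoken an element of $D$ voluntarily in an earlier round, he was already the first into $D$, and the argument below applies verbatim.) This is exactly where the hypotheses enter: the inequality $s \ge a$ --- equivalently $k_1 \le k-a$ --- is what guarantees that it is Alice and not Bob who survives the squeeze, and it is the reason the residues $\{0,\dots,k-a\}$ can be excluded.

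With Bob identified as the first player to say an element $z \in D$, I close the argument exactly as in Theorem~\ref{thm:pmain}. Bob's strategy is deterministic and starts from the same state $x$ in both cases, and Alice's avoid-$D$ strategy is defined identically in the two cases and plays identically until an element of $D$ is first spoken; hence Bob speaks the same element $z \in D$ in both cases. If $z \in D_1$ then $z$ was already said in case~1, and if $z \in D_2$ then $z$ was already said in case~2, so in at least one case Bob repeats a number and loses --- contradicting that his strategy is winning. Therefore $k_1 \notin \{0, \dots, k-a\}$, i.e.\ $|S_1 \setminus S_2| \bmod p \in \{k-a+1, \dots, p-1\}$, as claimed.

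The step I expect to be most delicate is the modular bookkeeping around the critical round $q+1$: one must check that it occurs during legal play (so that Bob's forced $D$-move is a genuine repetition rather than merely an artifact of the short final round of $k$ numbers), and that Alice never falls below $a$ available non-$D$ numbers on any earlier turn. Because Alice now consumes $a$ numbers per round rather than one, these floor counts are somewhat more involved than in the $a=1$ case, but the single identity $s \equiv k - k_1 \pmod p$ together with $s \ge a$ should drive every inequality uniformly.
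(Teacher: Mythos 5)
Your proof is correct and follows essentially the same route as the paper: the same avoid-$D$ adversarial strategy for Alice, the same two-case setup from memory state $x$, and the same conclusion that Bob must be first into $D$ and must say the same element in both cases, yielding a repetition in one of them. Your bookkeeping via the non-$D$ count $pq+s$ with $s=k-k_1$ is equivalent to (and arguably cleaner than) the paper's computation with the critical turn $2\lfloor(N-|D_1|)/p\rfloor+1$, and your explicit remark that Bob entering $D$ voluntarily earlier only helps is a point the paper leaves implicit.
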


\begin{proof}
Let $D = S_1 \Delta S_2$, let $D_1 = S_1 \setminus S_2$, and let $D_2 = S_2 \setminus S_1$. Assume to the contrary that $|D_1|\equiv k_1\mod p$ where $k_1\in\{0,1,..,k-a\}$ . Note then that $|D_2|=|D_1|$ (since $|S_1| = |S_2|$) and that $|D| = 2|D_1|$. We'll consider two possible cases for the state of the game after turn $2r$: 1. Bob is at state $x$, and the set of numbers that have been said is $S_1$, and 2. Bob is at state $x$, and the set of numbers that have been said is $S_2$. 

Consider the following strategy that Alice can play in either of these cases: ``say the smallest number that has not been said that is not in $D$''. We claim that if Alice uses this strategy, Bob will be the first person (after turn $2r$) to say an element of $D$. Note that Alice will not say an element of $D$ until turn $2\lfloor\frac{N- |D_1|}{p}\rfloor+1$; this is since:

\begin{enumerate}
\item
If we are in case 1, then all of the numbers in $D_1$ have been said but none of the numbers in $D_2$ have been said. There are therefore $|D_2| = |D_1|$ numbers in $D$ which have not been said, so Alice can avoid saying an element of $D$ until turn $2\lfloor\frac{N- |D_1|}{p}\rfloor+1$ as the number of numbers remaining before Alice's turn would be $|D_1|+k-k_1\ge |D_1|+a$. Hence, Alice can say $a$ numbers not belonging to the set $D_1$. 

\item
Likewise, if we are in case 2, the argument proceeds symmetrically.
\end{enumerate}

On the other hand, if no element of $D$ is spoken by either player between turn $2r$ and turn $2\lfloor\frac{N- |D_1|}{p}\rfloor+1$, then at turn $2\lfloor\frac{N- |D_1|}{p}\rfloor+2$, Bob has to say $b$ elements from $|D_1|+k-k_1-a$ elements and hence, there would be intersection with elements of $D$ (as $k-k_1-a\le p-1-a=b-1$). 

Let $y_1$ be the memory state of Bob when he first speaks an element of $D$ in case 1, and define $y_2$ similarly. We claim that $y_1 = y_2$. Indeed, since Bob's strategy is deterministic and starts from $x$ in both cases 1 and 2, Alice's strategy plays identically in both case 1 and case 2 until an element of $D$ has been spoken. It follows that Bob must speak the same element of $D$ in both cases. But if this element is in $D_1$, and they are in case 1, then this element has already been said before; similarly, if this element is in $D_2$, and they are in case 2, then this element has also been said before. Regardless of which element in $D$, Bob speaks at this state, there is some case where he loses, which contradicts the fact that Bob's strategy is successful. It follows that $|D_1|\bmod p\in\{k-a+1,...,k-1,k,..,p-1\}$, as desired.
\end{proof}

\begin{claim}\label{claim:pumax}
$|\U_{x}| \leq p{N\choose p-1}$.
\end{claim}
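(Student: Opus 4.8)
The plan is to recycle the structure of Claim \ref{claim:pumax} from the proof of Theorem \ref{thm:pmain}: exhibit the family $\U_x$ as a $(p,L)$-Modtown for a suitable residue set $L$, and then invoke the Frankl--Wilson bound (Lemma \ref{lem:genoddtown}). The only genuine change from the prime case is that $L$ will now have more than one element, so the resulting upper bound is a sum of binomials rather than $N+1$.

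First I would translate the set-difference condition of the preceding Lemma \ref{lem:psetdifference} into a condition on intersection cardinalities. Every $S \in \U_x$ has $|S| = pr \equiv 0 \bmod p$, and for distinct $S_1, S_2 \in \U_x$ we have $|S_1 \cap S_2| = |S_1| - |S_1 \setminus S_2| \equiv -|S_1 \setminus S_2| \bmod p$. By Lemma \ref{lem:psetdifference}, $|S_1 \setminus S_2| \bmod p \in \{k-a+1, \dots, p-1\}$; negating modulo $p$ and reindexing, this gives $|S_1 \cap S_2| \bmod p \in \{1, 2, \dots, p-k+a-1\} =: L$. Since $k \in \{a, a+1, \dots, p-1\}$, the endpoint $p-k+a-1$ lies in $\{a, a+1, \dots, p-1\}$, so $0 \notin L$ and $|L| = p-k+a-1 \le p-1$.

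With $L$ in hand, the two Modtown conditions are immediate: each $|S| \equiv 0 \bmod p$ while $0 \notin L$, so $|S| \bmod p \notin L$; and $|S_1 \cap S_2| \bmod p \in L$ for distinct sets by the computation above. Thus $\U_x$ is a $(p, L)$-Modtown with $s := |L| = p-k+a-1$, and Lemma \ref{lem:genoddtown} yields $|\U_x| \le \sum_{i=0}^{s} \binom{N}{i}$. Finally I would bound this sum crudely: since $s \le p-1$ is constant while $N \to \infty$, we have $s \le N/2$, so the coefficients $\binom{N}{i}$ are increasing in $i$ over $0 \le i \le s$; hence $\sum_{i=0}^s \binom{N}{i} \le (s+1)\binom{N}{s} \le p \binom{N}{p-1}$, using $s+1 \le p$ together with $\binom{N}{s} \le \binom{N}{p-1}$ (again because $s \le p-1 \le N/2$). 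This gives the desired $|\U_x| \le p\binom{N}{p-1}$.

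I do not expect a serious obstacle, as the argument is a direct adaptation of the prime case. The one point requiring care is the sign bookkeeping in the residue translation --- in particular verifying that $0 \notin L$, which is precisely what makes the Modtown conditions valid and ultimately relies on the hypothesis $k \ge a$ (equivalently $N \bmod p \ge a$). The concluding binomial estimate is routine monotonicity and should be stated in a line.
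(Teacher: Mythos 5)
Your proof is correct and takes essentially the same route as the paper: realize $\U_x$ as a $(p,L)$-Modtown by converting the set-difference condition of Lemma \ref{lem:psetdifference} into an intersection condition, then apply Lemma \ref{lem:genoddtown}. The only difference is cosmetic --- you work with the tighter residue set $L=\{1,\dots,p-k+a-1\}$ and then pad the resulting binomial sum up to $p\binom{N}{p-1}$, whereas the paper simply takes $L=\{1,\dots,p-1\}$ (so $s=p-1$ and the stated bound falls out immediately); both are valid.
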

\begin{proof}
We claim the sets in $\U_x$ form a $(p,\{1,2,...,p-1\})$-Modtown, from which this conclusion follows (Lemma \ref{lem:genoddtown}). Each set in $\U_{x}$ has cardinality $pr\equiv 0\mod p$. By Lemma \ref{lem:psetdifference}, any pair of distinct sets $S_1, S_2 \in \U_x$ has $|S_1 \setminus S_2|\mod p\in\{k-a+1,...,k-1,k,..,p-1\}\subseteq\{1,2,..,p-1\}$ (as $k\ge a$). Note that since $|S_1 \cap S_2| = |S_1| - |S_1 \setminus S_2|$, it follows that $|S_1 \cap S_2|\mod p$ also belongs to $\{1,2,..,p-1\}$.\end{proof}

We established that for some $r$, $\cS_r$ must have cardinality at least $2^{\Omega(N)}$. Since each element in $\cS_r$ belongs to at least one collection $\U_{x}$, and since each collection $\U_{x}$ has cardinality at most $pN^p$ (Claim \ref{claim:pumax}), the number $M$ of memory states $x$ is at least $2^{\Omega(N)}/(pN^p) $. It follows that $m = \log M \geq \Omega(N)$, as desired. This proves Theorem \ref{thm:pmainab}.
\end{proof}

\subsection*{Proof of Theorem \ref{thm:mv}}

\begin{proof}
In proof of Theorem \ref{thm:pmain}, we use the fact that $p$ is prime only to bound the size of $\U_x$ and thus, Lemma \ref{lem:cov} and \ref{lem:psetdifference} hold even for composite $m$. As $\cS_r$ must have cardinality at least $2^{\Omega(N)}$ and there are at most $2^{o(N)}$ memory states $x$, by pigeonhole principle, there exists $x'$ with $|\U_{x'}|\ge 2^{\Omega(N)}$. Let's define the set of vectors $U$ based $U_{x'}$. For every subset $S\in U_{x'}$, we add a vector $v_S$ to $U$ in some order where $v_S$ is the characteristic vector of $S$. Clearly, $|U|=|U_{x'}|$ and $U\subseteq\{0,1\}^N$. We claim the $U$ and $U$ form a Matching Vector family as $v_S^Tv_S=|S|\mod m=0$, for all $S$ and $v_{S_1}^Tv_{S_2}=|S_1\cap S_2|\not\equiv 0\mod m\forall S_1\neq S_2$ by Lemma \ref{lem:psetdifference}. Hence, we have a exponential sized Matching Vector family ($U$, $U$).
\end{proof}

\end{document}